\documentclass[journal,12pt,onecolumn]{IEEEtran}

\usepackage{amsthm}
\usepackage{amsmath,amssymb,mathtools}
\usepackage{cite}
\usepackage[colorlinks=true,linkcolor=blue,citecolor=blue,urlcolor=blue]{hyperref}

\newtheorem{theorem}{Theorem}
\newtheorem{lemma}[theorem]{Lemma}
\newtheorem{proposition}[theorem]{Proposition}

\theoremstyle{definition}
\newtheorem{definition}[theorem]{Definition}
\theoremstyle{plain}

\newcommand{\F}{\mathbb F}
\newcommand{\ind}{\mathbf{1}}
\newcommand{\wt}{\operatorname{wt}}
\newcommand{\Prob}{\mathbb{P}}
\newcommand{\Exp}{\mathbb{E}}
\newcommand{\GL}{\operatorname{GL}}
\DeclareMathOperator{\rank}{rank}
\DeclareMathOperator{\Row}{Row}

\begin{document}

\title{The Optimal Asymptotic Rate of Generalized Covering Codes}
\author{Hengzhuo~Li, Chong~Shangguan, and Hengjia~Wei%
\thanks{H. Li and H. Wei are with the School of
Mathematics and Statistics, Xi'an Jiaotong University, Xi'an 710049,
China.  Emails: leeker0626@outlook.com (H. Li) and hjwei05@gmail.com (H. Wei).
The work of H. Li and H. Wei was supported by the National Natural Science
Foundation of China under Grant 12371523.}%
\thanks{C. Shangguan is with the Research Center for
Mathematics and Interdisciplinary Sciences, Shandong University, Qingdao
266237, China, and the Frontiers Science Center for Nonlinear Expectations,
Ministry of Education, Qingdao 266237, China.
Email: theoreming@163.com. C. Shangguan was supported by the National Natural Science Foundation of
China under Grant Nos.~12571352 and~12231014, and the Fundamental Research
Funds for the Central Universities.}%
\thanks{H. Wei is the corresponding author.}}

\maketitle

\begin{abstract}
Let $G_q$ be an alphabet of size $q\geq2$.  We determine the optimal
asymptotic rate of generalized covering codes $C\subseteq G_q^n$, whose
covering centers in $G_q^{t\times n}$ are constrained to the product form
$C^t$.  For every fixed integer $t\geq1$ and every $\rho\in[0,1]$, we
prove that
\[
\kappa_t(\rho,q)=
\begin{cases}
1-H_{q^t}(\rho),&0\leq\rho<1-q^{-t},\\
0,&1-q^{-t}\leq\rho\leq1,
\end{cases}
\]
where $\kappa_t(\rho,q)$ denotes the minimum asymptotic rate
$n^{-1}\log_q|C|$ among codes whose $t$-th covering radius is at most
$\rho n$, and $H_{q^t}$ is the $q^t$-ary entropy function.  When $q$ is a
prime power, we prove that the same formula holds under the additional
requirement that $C\leq\F_q^n$.  Thus, both the product-form
constraint and linearity are asymptotically cost-free: the resulting rate
is the ordinary sphere-covering rate over an alphabet of size $q^t$.
This extends the recent $t=2$ result of Elimelech and Schwartz for codes
without a linearity constraint and the classical $t=1$ result of Cohen
and Frankl for linear codes, thereby resolving both open problems posed
by Elimelech and Schwartz.

Our proofs are probabilistic and combine tools from information theory
and probabilistic combinatorics, including the method of types, Janson's
inequality, the second-moment method, and a structured alteration
argument. Direct applications of Janson’s inequality and the second-moment method are obstructed by strong dependencies among candidate error matrices, namely, the differences between target matrices and covering centers. We overcome this obstruction by restricting these errors to a balanced exact-type class contained in the metric ball and having the same exponential size as the full ball.  Standard
type-class estimates, together with Shearer's inequality, then give the
required bounds on the number of error-matrix pairs whose selected rows
have a prescribed difference.
\end{abstract}

\begin{IEEEkeywords}
generalized covering codes, generalized covering radius,
linear codes, method of types, Janson's inequality, second-moment method
\end{IEEEkeywords}

\section{Introduction}

A covering code in a metric space is a set of codewords such that every
point of the space lies within a prescribed distance of some codeword.
Covering codes are classical objects in coding theory and discrete
mathematics, with connections to a variety of applications
\cite{Astola09,Bierbrauer2008,Linderoth09,Micciancio2004}; see
\cite{cohen-covering-codes} for a comprehensive treatment.

In a \(q\)-ary Hamming space, the covering problem has a familiar
football-pool interpretation \cite{Hamalainen1995}.  Suppose that \(n\)
games are played, each with \(q\) possible outcomes.  A ticket gives one
prediction per game and wins if at least \(n-r\) predictions are correct.
The minimum number of tickets required to guarantee a win for every
possible outcome sequence is precisely the minimum size of a length-\(n\)
\(q\)-ary code with covering radius at most \(r\).  This problem has been
studied extensively in combinatorics
\cite{Fernandes1983,Kamps1967,Ostergard1996}.

Elimelech and Schwartz \cite{elimelech-schwartz} introduced the
higher-order version of this model.  In the order-\(t\) model, each game
consists of \(t\) subgames, each with \(q\) possible outcomes, while a
single ticket gives only one \(q\)-ary prediction per game.  An ordered
selection of \(t\) tickets, not necessarily distinct, provides predictions
for all subgames: the \(j\)-th ticket is used for the \(j\)-th subgame of
every game.  A game is predicted correctly when all its \(t\) subgames are
predicted correctly, and the order-\(t\) football-pool problem asks for the
minimum number of tickets needed to guarantee at least \(n-r\) correctly
predicted games.

Equivalently, an outcome is a \(t\times n\) matrix over a \(q\)-symbol
alphabet, while an ordered choice of \(t\) tickets from a code \(C\) is a
matrix whose rows belong to \(C\), hence an element of \(C^t\).  The
\(t\)-distance between two such matrices is the number of columns in which
they differ.  Thus, the \(t\)-th covering radius \(R_t(C)\) is the maximum
distance from an outcome matrix to \(C^t\).  The problem is therefore a
covering problem in a Hamming space over an alphabet of size \(q^t\), with
the restriction that the centers have the product form \(C^t\).  When
\(q\) is a prime power and \(C\leq\F_q^n\), every row must be chosen from
the same linear code \(C\).

There is also an extension-field interpretation.  After fixing an
\(\F_q\)-basis of \(\F_{q^t}\), the space \(\F_q^{t\times n}\) is
isometric to \(\F_{q^t}^n\), and \(C^t\) is mapped to the
\(\F_{q^t}\)-linear span of \(C\).  Thus, the linear problem asks whether
\(\F_{q^t}\)-linear codes that admit a generator matrix with entries in
\(\F_q\) can attain the ordinary sphere-covering rate.  We make this
equivalence precise in Section~\ref{sec:preliminaries}.

The generalized covering radius was originally introduced for linear
codes in connection with linear data-querying protocols, where it
characterizes the trade-off among access complexity, storage, and latency
and is related to generalized Hamming weights
\cite{elimelech-firer-schwartz,VKWei1991}.  It has since been studied for
Reed--Muller codes \cite{elimelech-wei-schwartz}, for several families of
BCH and binary cyclic codes
\cite{yohananov-schwartz2025,ozbudak-ozturk2026,xiong-yip2026,
ozbudak-ozturk-second2026,luo-et-al2026,
barouch-essayag-zabokritskiy2026}, and through a finite-geometric
framework for linear codes \cite{alfarano2026geometric}.  We focus here on
its optimal asymptotic rate.

For an integer \(Q\geq2\) and \(x\in[0,1]\), let
\[
H_Q(x):=x\log_Q(Q-1)-x\log_Qx-(1-x)\log_Q(1-x),
\]
where \(0\log_Q0=0\).  Let \(\kappa_t(\rho,q)\) denote the optimal
asymptotic rate of \(q\)-ary codes whose normalized \(t\)-th covering
radius is at most \(\rho\), and, when \(q\) is a prime power, let
\(\kappa_t^{\mathrm{Lin}}(\rho,q)\) denote the corresponding optimal rate
for linear codes over \(\F_q\).  Formal definitions are given in
Section~\ref{sec:preliminaries}.
We refer to the former as unrestricted codes: here unrestricted means only
that no linearity condition is imposed on $C$, while the covering centers
remain constrained to the product form $C^t$.

For \(t=1\), the classical sphere-covering bound and the
Stein--Lov\'asz set-covering theorem \cite{Stein1974,Lovasz1975} give
\[
\kappa_1(\rho,q)=
\begin{cases}
1-H_q(\rho),&0\leq\rho<1-q^{-1},\\
0,&1-q^{-1}\leq\rho\leq1.
\end{cases}
\]
When \(q\) is a prime power, Cohen and Frankl \cite{Cohen85} further
showed that the same asymptotic rate is attainable by linear codes over
\(\F_q\).

Elimelech and Schwartz \cite{elimelech-schwartz} determined the
unrestricted rate for \(t=2\):
\[
\kappa_2(\rho,q)=
\begin{cases}
1-H_{q^2}(\rho),&0\leq\rho<1-q^{-2},\\
0,&1-q^{-2}\leq\rho\leq1.
\end{cases}
\]
For the linear rate problem, Elimelech, Firer, and Schwartz proved the
sphere-covering lower bound, general upper bounds, and a sharper
probabilistic upper bound in the special case \(q=2\) and \(t=2\)
\cite[Theorem~22]{elimelech-firer-schwartz}.  Their probabilistic upper
bound for \(q=2\) and \(t=2\), however, did not close the gap with the
sphere-covering bound in the interior of the nontrivial range.

Elimelech and Schwartz \cite{elimelech-schwartz} posed two further
questions.  Problem~1 asks whether the unrestricted rate equals the
sphere-covering rate for every fixed order \(t\geq3\), whereas Problem~2
asks whether imposing linearity changes the optimal asymptotic rate over a
finite field for every fixed order \(t\geq2\).  We resolve both problems.

Our first result treats codes that are not necessarily linear.
\begin{theorem}\label{thm:general-rate}
For every integer \(q\geq2\), every fixed integer \(t\geq1\), and every
\(\rho\in[0,1]\),
\[
\kappa_t(\rho,q)=
\begin{cases}
1-H_{q^t}(\rho),&0\leq\rho<1-q^{-t},\\
0,&1-q^{-t}\leq\rho\leq1.
\end{cases}
\]
\end{theorem}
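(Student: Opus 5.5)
The lower bound is the sphere-covering count. If $R_t(C)\le\rho n$, then the $|C|^t$ centers in $C^t$, viewed as words of length $n$ over an alphabet of size $q^t$, cover $G_q^{t\times n}$ with balls of radius $\rho n$. Hence
\[
|C|^t\,V_{q^t}(n,\rho n)\ge q^{tn}.
\]
Using $V_{q^t}(n,\rho n)=q^{tn(H_{q^t}(\rho)+o(1))}$ for $\rho<1-q^{-t}$, this gives $n^{-1}\log_q|C|\ge 1-H_{q^t}(\rho)-o(1)$. For $\rho\ge1-q^{-t}$ the bound $\kappa_t\ge0$ is trivial. Since $\kappa_t$ is monotone in $\rho$ and the formula is continuous at $\rho=1-q^{-t}$, it suffices to prove the matching upper bound for $\rho<1-q^{-t}$. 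The case $\rho\ge1-q^{-t}$ then follows by monotonicity and a limit.

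For the upper bound I would use a random code plus alteration. Take $C$ as a random set of $M=q^{n(1-H_{q^t}(\rho)+\epsilon)}$ words. For a fixed target $Y\in G_q^{t\times n}$, the event that $Y$ is uncovered is $\{Y-E\notin C^t \text{ for all } E \text{ in the error set}\}$. Each $Y-E$ is covered only when all $t$ of its rows lie in $C$, so the events are monotone increasing in $C$, and I would apply Janson's inequality. Following the abstract, I would not let $E$ range over the whole ball. Instead I would restrict $E$ to a single balanced exact-type class $T$. Here the columns of $E$ are nonzero in exactly $\rho n$ positions, and the joint type of the nonzero columns is uniform over $(G_q^t\setminus\{0\})$, refined so that every subset of rows has a controlled marginal type. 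Then $|T|=q^{tn(H_{q^t}(\rho)-o(1))}$.

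Including each word independently with probability $p=M/q^n$, the expectation is
\[
\mu=|T|\,p^t=q^{t n\epsilon-o(n)},
\]
which is exponentially large. The Janson correction $\Delta$ sums over pairs $E\ne E'$ in $T$ whose matrices $Y-E$ and $Y-E'$ share at least one row. Suppose they share exactly the rows indexed by a nonempty pattern, that is, a set $S$ of rows of the first matrix equal to a set of rows of the second. Then the pair contributes $p^{2t-|S|}$. The number of such pairs is at most $|T|$ times the number of $E'\in T$ whose rows on $S$ are prescribed, because those rows are determined by $E$ and $Y$.

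The main obstacle is bounding this count: the number $N_S$ of type-class members with $|S|$ rows fixed. I would use Shearer's inequality, or equivalently entropy subadditivity over the rows of a uniformly random element of $T$, to show
\[
\log|T|-\log N_S\ \ge\ \frac{|S|}{t}\log|T|-o(n).
\]
This is exactly where balance matters: it makes each row of the type class carry at least a $1/t$ fraction of the entropy. The inequality then gives
\[
\Delta\le\sum_S|T|\,N_S\,p^{2t-|S|}\le\mu^2 q^{-\Omega(n)},
\]
up to $2^{2t}$ terms. Strictly, I would need $\mu^{2}/\Delta$ to be exponentially large relative to $n$, which holds since $\mu^{|S|/t}$ grows exponentially. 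Janson's inequality then gives
\[
\Pr[Y\text{ uncovered}]\le \exp\!\Bigl(-\min\bigl(\mu/2,\ \mu^2/(2\Delta)\bigr)\Bigr)=\exp(-q^{\Omega(n)}).
\]

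A union bound over the $q^{tn}$ targets then shows that some $C$ covers everything with $|C|\le 2M$, using a Chernoff bound on $|C|$. Letting $\epsilon\to0$ yields $\kappa_t(\rho,q)\le1-H_{q^t}(\rho)$. The structured alteration mentioned in the abstract would be my fallback if the overlap bound fails for degenerate targets $Y$, for example ones with repeated rows. In that case I would handle those targets separately by adding a small number of words to $C$.
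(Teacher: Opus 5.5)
Your framework is the paper's: the sphere-covering lower bound, Janson's inequality over the balanced exact-type class, Shearer to get $H(P_{\theta,S})\geq|S|h$, and a union bound over the $q^{tn}$ targets. The Janson correction step, however, does not go through as written, because you sample a \emph{single} random set $C$.

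\textbf{The gap.} With one set $C$, $\{Y-E\in C^t\}$ is the event that $A_E:=\{\overline y_i-\overline e_i:i\in[t]\}\subseteq G_q^n$ lies in $C$. This set can have fewer than $t$ elements: rows $i\neq j$ of $Y-E$ coincide exactly when $\overline e_i-\overline e_j=\overline y_i-\overline y_j$. For such $E$, a pair $(E,E')$ contributes $p^{|A_E\cup A_{E'}|}$, and $|A_E\cup A_{E'}|$ is smaller than the exponent $2t-|S|$ you assign. Your count, which prescribes the rows of $E'$ from $E$ and $Y$, says nothing about how many $E\in T$ have such internal coincidences. So your bound on $\Delta$ is unjustified for these pairs.

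Your fallback is also aimed at the wrong set. When $\overline y_i=\overline y_j$ no coincidence can occur, since every $E\in T$ has $\overline e_i\neq\overline e_j$. The relevant targets are those for which some $\overline y_i-\overline y_j$ has exactly the type of $X_i-X_j$ under $P_\theta$. These are an exponentially small fraction but number at least $q^{(t-1)n}>|C|$. Patching them target by target is far too costly, so ``adding a small number of words'' is not yet an argument.

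\textbf{How the paper avoids it.} The paper draws $t$ independent $p$-random sets $C_1,\dots,C_t$ and applies Janson on the ground set $[t]\times G_q^n$ with $E_{\mathbf e}=\{(i,\overline v_i-\overline e_i):i\in[t]\}$. It then takes $C=C_1\cup\cdots\cup C_t$. Every $E_{\mathbf e}$ has exactly $t$ elements, and $I_{\mathbf e}=1$ still gives $\mathbf v-\mathbf e\in C_1\times\cdots\times C_t\subseteq C^t$. Two events can share only same-index rows with $\overline e_i=\overline f_i$, so only the row-agreement count $N_S(\mathcal E_n;\mathbf 0)$ is needed. The union costs only a factor $t$ in size.

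\textbf{Repairs if you keep one code.} You must supply the missing count. The automorphism of $G_q^t$ replacing $x_j$ by $x_j-x_i$ fixes $\overline 0$ and permutes the nonzero vectors, so it preserves $P_\theta$ and the type class. Hence prescribing a row difference costs as much as prescribing a row. Each coincidence gains at most $p^{-1}=q^{n(h-\epsilon)}$ in probability (with $h=H_{q^t}(\rho)$) while costing at least $q^{-nh}$ in count, up to polynomial factors. The extra terms are therefore again $\mu^2q^{-\Omega(n)}$.

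Alternatively, run Janson only over the non-degenerate targets. Then finish with $O(1)$ rounds of $C\mapsto C\cup\bigcup_i(C+\overline w_i)$ for a well-chosen $W\in G_q^{t\times n}$. This puts $C^t+W$ into the new $t$-th power and, on averaging over $W$, squares the uncovered density at a factor-$(t+1)$ cost.

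Separately, the exact type class needs a rational $\rho_0<\rho$ and lengths with $\rho_0n/(q^t-1)\in\mathbb Z$, as in the paper.
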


Theorem~\ref{thm:general-rate} says that, in the ordinary covering problem
on \(G_q^{t\times n}\), restricting the centers to the product family
\(C^t\) does not change the optimal asymptotic covering rate.

Our second result shows that imposing linearity is also asymptotically
cost-free.
\begin{theorem}\label{thm:linear-rate}
For every prime power \(q\), every fixed integer \(t\geq1\), and every
\(\rho\in[0,1]\),
\[
\kappa_t^{\mathrm{Lin}}(\rho,q)=\kappa_t(\rho,q)=
\begin{cases}
1-H_{q^t}(\rho),&0\leq\rho<1-q^{-t},\\
0,&1-q^{-t}\leq\rho\leq1.
\end{cases}
\]
\end{theorem}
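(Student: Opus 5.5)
The lower bound $\kappa_t^{\mathrm{Lin}}(\rho,q)\ge\kappa_t(\rho,q)\ge$ the displayed formula is immediate. Every linear code is in particular a code, and $C^t$ is a code in $G_q^{t\times n}\cong G_{q^t}^n$ with $|C|^t$ centers. So the sphere-covering bound gives
\[
|C|^t\cdot\bigl|B_{q^t}(\rho n)\bigr|\ge q^{tn},
\]
hence $\frac1n\log_q|C|\ge 1-H_{q^t}(\rho)-o(1)$ for $\rho<1-q^{-t}$. Everything therefore reduces to the upper bound for linear codes. For $\rho\ge 1-q^{-t}$ it suffices to take $C=\{0\}$ or a constant-dimension code, and to use a limiting argument together with the monotonicity of $\kappa$. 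So fix $\rho<1-q^{-t}$ and $\epsilon>0$. We seek a linear $C\le\F_q^n$ of dimension $k=(1-H_{q^t}(\rho)+\epsilon)n$ such that every $Y\in\F_q^{t\times n}$ lies within $t$-distance $\rho n$ of $C^t$.

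I will use a two-stage random linear construction. By linearity of $C^t$ (as an $\F_q$-space, equivalently the $\F_{q^t}$-span of $C$), it suffices to show that the coset $Y+C^t$ contains a matrix of low column weight for every $Y$. Stage one is to choose a random $C_0$ of dimension slightly below $k$ so that all but a $q^{-\delta n}$ fraction of targets $Y$ are covered. Let $\mathcal T$ be a balanced exact-type class of $t\times n$ matrices: exactly $\rho n$ nonzero columns, with each nonzero column value in $\F_q^t\setminus\{0\}$ appearing equally often. Then $|\mathcal T|=q^{tn(H_{q^t}(\rho)-o(1))}$. For fixed $Y$, let $X=\#\{E\in\mathcal T: Y-E\in C_0^t\}$, i.e. every row of $Y-E$ lies in $C_0$. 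Since $\Exp X$ is exponentially large, I apply the second-moment method (or Janson) to $X$.

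The second moment involves pairs $E,E'$. The events are correlated through the row span of $E-E'$: the probability that $Y-E$ and $Y-E'$ both lie in $C_0^t$ is $q^{-(n-k)(t+r)}$, roughly, where $r=\rank(E-E')$ as a $t\times n$ matrix over $\F_q$. More precisely, the dependence is through $\dim(\Row(Y-E)+\Row(Y-E'))$. So I need to count pairs $(E,E')\in\mathcal T^2$ whose difference has a prescribed row space, or whose selected rows have a prescribed difference. This is the main obstacle. Low-rank differences are rare, but the count must beat the probability gain. Here the balanced type class is essential. Fix a subspace, i.e. a choice of $s$ rows of $E-E'$ spanning the rest. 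The number of $E'$ given $E$ with this structure is bounded via the method of types on the projected columns, and Shearer's inequality bounds the entropy of the joint column distribution by the row-subset marginals. I aim for the estimate that the contribution of rank-$r$ pairs is at most $(\Exp X)^2 q^{-\Omega(n)}$ for $r<t$, whenever $k$ exceeds the threshold by $\epsilon n$. This gives $\Prob(X=0)\le q^{-\delta n}$. Coset invariance (translating $Y$ by $C_0^t$) handles the rank-degenerate targets $Y$ where $\Row(Y)$ is small; these targets are few and are treated separately.

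Stage two is a structured alteration. Let $U$ be the set of uncovered targets, so $\Exp|U|\le q^{tn-\delta n}$. I add $O(n)$ further random vectors to $C_0$. Each added random vector $v$ makes a fixed uncovered $Y$ covered with constant probability, because the translates $Y+\alpha v$ over $\alpha\in\F_q^t$ give fresh essentially independent trials. Equivalently, I use the Cohen–Frankl style halving argument: adjoining a random vector to $C$ reduces the uncovered fraction by a constant factor in expectation, since the union $(C+\F_q v)^t$ contains the $q^t$ translates $C^t+\alpha v$. After $O(n)$ steps no uncovered target remains. The dimension grows by only $O(\delta n/t)$, which is absorbed into $\epsilon$. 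The same argument, specialized without linearity, yields Theorem~\ref{thm:general-rate} as well, and this completes the plan for Theorem~\ref{thm:linear-rate}.
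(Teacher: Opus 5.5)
Your Stage~1 follows the paper's architecture: a balanced exact-type class, a second moment over a random linear code, and type counting plus Shearer for prescribed row differences. Stage~2, however, does not work as stated, and this is a genuine gap.

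\textbf{Stage~2.} Adjoining one random vector $v$ only contributes translates by the rank-one matrices $\alpha v^{\top}$, $\alpha\in\F_q^t$. For each fixed $\alpha$ these range over an $n$-dimensional subspace of $\F_q^{t\times n}$. So they are neither uniform nor independent trials, and nothing in your argument forces a constant-factor drop in the uncovered density.

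More decisively, the accounting fails even if such a drop is granted. Reducing $q^{(t-\delta)n}$ uncovered targets to zero by constant factors takes $\Theta(n)$ steps, hence $\Theta(n)$ extra dimensions (about $(t-\delta)n\log_2 q$ for halving). That destroys the rate rather than costing $O(\delta n/t)$.

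The missing idea is density squaring. Adjoin at once the $t$ rows of a uniformly random $W\in\F_q^{t\times n}$. Then $(C+\Row(W))^t\supseteq C^t\cup(C^t+W)$, so the new uncovered set lies in $\mathcal U(C)\cap(\mathcal U(C)+W)$. Its expected size is $|\mathcal U(C)|^2/q^{tn}$. Each round therefore costs $t$ dimensions and squares the density. From density $q^{-\delta n}$ you need only $O(\log(t/\delta))$ rounds; the paper, starting from density $O(n^{-2})$, uses $O(\log n)$ rounds.

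\textbf{Stage~1 variance.} Your variance plan tracks the wrong parameter. The probability that $Y-E$ and $Y-E'$ both lie in $C_0^t$ is governed by $s=\dim(\Row(Y-E)\cap\Row(Y-E'))$, not by $\rank(E-E')$. In fact $q^{-(n-k)(t+r)}$ is essentially a lower bound for this probability, not an upper bound.

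For example, take $t=2$ and pairs in which the first row of $Y-E$ equals the second row of $Y-E'$. These pairs have $s\geq1$ and are positively correlated by a factor $q^{(n-k)s}$. Yet $E-E'$ generically has full rank, so bounding only pairs with $\rank(E-E')<t$ misses them.

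The paper instead classifies pairs by $s$. It chooses $U,V\in\GL_t(q)$ so that the first $s$ rows of $U(Y-E)$ and $V(Y-E')$ coincide. It then uses the $\GL_t(q)$-invariance of the type class to turn the count into $N_{[s]}(\mathcal T;\mathbf z)$ with $\mathbf z=(VY-UY)_{[s]}$. Because $\mathbf z$ depends on the target, the row-difference bound is needed for every $\mathbf z$, not just $\mathbf z=\mathbf 0$.

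Two further points need adjusting. The pairs with $s=t$ contribute only $O(\Exp X)$ and must be treated separately. The degeneracy to remove is a rank-deficient $Y-E$, which the paper handles by restricting to full-rank coefficient matrices; it is not a rank-deficient target $Y$.
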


In the extension-field formulation,
Theorem~\ref{thm:linear-rate} determines the optimal asymptotic rate under the constraint that an \(\F_{q^t}\)-linear
covering code admit a generator matrix with entries in \(\F_q\).
Neither linearity nor this subfield-generation constraint changes the
sphere-covering rate.

The proof also gives the following finite-length statement when
\(t\geq2\).  For every rational \(0<\theta<1-q^{-t}\) and every
sufficiently large length \(n\), there is a linear code
\(C_n\leq\F_q^n\) such that
\[
R_t(C_n)\leq\theta n,
\qquad
\dim_{\F_q}C_n\leq n\bigl(1-H_{q^t}(\theta)\bigr)+O(\log n).
\]

\subsection{Proof overview.}
The lower bounds in the nontrivial range of
Theorems~\ref{thm:general-rate} and~\ref{thm:linear-rate} follow from the
sphere-covering bound
\eqref{eq:lower}.  We next outline the proofs of the matching
upper bounds.

Both upper-bound proofs face the same obstacle.  For an auxiliary relative
radius $\theta$, the full metric ball $\mathcal B_t(\theta n)$ has the exponential size required to attain the sphere-covering bound, but it contains
exponentially large subfamilies of matrices that are zero on a fixed
nonempty set of rows.  These subfamilies produce too many pairs that agree
on selected rows. Janson’s inequality in the proof of Theorem~\ref{thm:general-rate} requires bounds on these row-agreement counts, whereas the second-moment argument for Theorem~\ref{thm:linear-rate} requires bounds on the number of pairs having every prescribed difference on the selected rows.  

The common remedy is to replace the full ball by the balanced
exact-type family $\mathcal E_n$ defined in Section~\ref{sec:common}.  This restriction preserves the optimal
exponential size of the full ball, while requiring every nonzero column
vector to occur equally often.  Lemma~\ref{lem:count} bounds the
relevant pair counts in terms of the entropies of the marginal
distributions on the selected rows, and
Lemma~\ref{lem:entropy-balance} uses Shearer's inequality to give the
normalized marginal-entropy bound needed in both proofs.  In the
unrestricted proof, the resulting estimate controls pairs that agree on
selected rows; in the linear proof, the same estimate controls pairs
having any prescribed difference on those rows.

\paragraph{The Unrestricted Proof}
We sample $t$ auxiliary codes independently and take their union.  Their
Cartesian product gives the independence across rows needed by Janson's
inequality, while their union retains the desired rate.  With the
row-agreement bounds, Janson's inequality makes the probability that a
fixed target matrix $\mathbf v\in G_q^{t\times n}$ is uncovered doubly
exponentially small, so a union bound gives complete coverage.

\paragraph{The Linear Proof}
We take $C=\Row(G)$ for a random generator matrix with
$n(1-H_{q^t}(\theta))+O(\log n)$ rows.  The row-difference bounds and the
second-moment method give a code that covers all but a vanishing fraction
of the target matrices.  A structured alteration (see
Lemma~\ref{lem:density-squaring}), based on the augmentation technique of
\cite[Proposition~21]{elimelech-firer-schwartz}, repeatedly squares the
uncovered density and completes the covering while adding only $O(\log n)$
dimensions.

Lastly, we want to emphasize that the unrestricted proof obtains complete
coverage directly, whereas the linear proof first obtains almost-complete
coverage and then uses alteration.

The rest of the paper is organized as follows.
Section~\ref{sec:preliminaries} gives the preliminaries, including the
extension-field interpretation and the sphere-covering lower bound.
Section~\ref{sec:common} constructs $\mathcal E_n$ and establishes the
counting and entropy estimates used in both proofs. Sections~\ref{sec:nonlinear-proof} and
\ref{sec:linear-proof} prove the unrestricted and linear results,
respectively, and Section~\ref{sec:conclusion} concludes the paper.

\section{Preliminaries}\label{sec:preliminaries}
For consistency, we follow the notation of \cite{elimelech-schwartz}.
Throughout, $[m]=\{1,\ldots,m\}$ for every positive integer $m$.
We denote by $G_q$ an abelian group of order $q\geq 2$, write $+$ for its
group operation, and write $0$ for its identity element.  The sets
$G_q^n$ and $G_q^{t\times n}$ consist, respectively, of length-$n$
vectors and $t\times n$ matrices with entries in $G_q$. Both of them are regarded
as abelian groups under entrywise addition.

Lowercase letters, such as
$v$, denote elements of $G_q$ (and hence scalars when $G_q$ is a field);
overlined lowercase letters, such as $\overline v$, denote vectors; and
bold lowercase letters, such as $\mathbf v$, denote matrices.  Matrices may
also be denoted by uppercase letters, such as $A$.  For
$\mathbf v\in G_q^{t\times n}$, we
write $\overline v_i$ for its \(i\)-th row and $\overline v^{(j)}$ for its
\(j\)-th column.  Thus, lower indices refer to rows, whereas parenthesized
superscripts refer to columns.  If $S\subseteq[t]$, then $\mathbf v_S$
denotes the submatrix of $\mathbf v$ formed by the rows indexed by $S$.
We use $0$, $\overline 0$, and $\mathbf 0$ for the zero group element,
the all-zero vector, and the all-zero matrix, respectively, with their
dimensions clear from context.

When $q$ is a prime power, we take $G_q=\F_q$.  For a matrix
$A$ over $\F_q$, we denote its row space and rank by
$\Row(A)$ and $\rank A$, respectively, and write
$\GL_t(q)$ for the group of invertible $t\times t$ matrices over
$\F_q$.  The notation $C\leq\F_q^n$ means that $C$ is an
$\F_q$-linear subspace of $\F_q^n$.

A nonempty subset $C\subseteq G_q^n$ is called a code, and its elements
are called codewords.  When $q$ is a prime power, such a code is called
linear if it is a linear subspace of $\F_q^n$.  In that case, $C$
is an $[n,k]_q$ linear code, where
$k=\dim_{\F_q}C=\log_q|C|$.

\paragraph{Generalized Covering Radius and Optimal Rates}
We now define the $t$-metric for the $t$-th covering radius.
\begin{definition}
Let $\mathbf v\in G_q^{t\times n}$ have columns
$\overline v^{(1)},\ldots,\overline v^{(n)}\in G_q^t$.  The $t$-weight of
$\mathbf v$ is
$\wt^{(t)}(\mathbf v):=
\left|\{j\in[n]\mid\overline v^{(j)}\neq\overline 0\}\right|$,
where $\overline 0$ denotes the zero vector in $G_q^t$.
For $\mathbf v,\mathbf u\in G_q^{t\times n}$, their $t$-distance is
$d^{(t)}(\mathbf v,\mathbf u):=\wt^{(t)}(\mathbf v-\mathbf u)$.
The $t$-ball of radius $r$ centered at $\mathbf v$ is
$B_r^{(t)}(\mathbf v):=\{\mathbf u\in G_q^{t\times n}\mid
d^{(t)}(\mathbf v,\mathbf u)\leq r\}$.
\end{definition}

We next define the $t$-th power and $t$-th covering radius of a code.

\begin{definition}
Let $C\subseteq G_q^n$ be a nonempty code and let $t$ be a positive integer.
The $t$-th power of $C$ is
\[
C^t
:=
\left\{
\begin{bmatrix}
\overline c_1\\
\vdots\\
\overline c_t
\end{bmatrix}
\in G_q^{t\times n}
\ \middle|\
\forall i\in[t],
\overline c_i\in C
\right\}.
\]
The $t$-th covering radius of $C$ is the covering radius of $C^t$ in
$G_q^{t\times n}$ with
respect to the $t$-metric:
\[
R_t(C):=\max_{\mathbf u\in G_q^{t\times n}}
\min_{\mathbf c\in C^t}d^{(t)}(\mathbf c,\mathbf u).
\]
\end{definition}

For $t=1$, $d^{(1)}$ is the usual Hamming distance and $R_1(C)$ is the
ordinary covering radius of $C$; see~\cite{cohen-covering-codes} for a
standard reference.  We are interested in the minimum cardinality of a
code $C\subseteq G_q^n$ satisfying $R_t(C)\leq r$.

\begin{definition}
For integers \(n,t\geq1\) and \(q\geq2\), and for \(0\leq r\leq n\), define
\[
k_t(n,r,q):=\min
\{\log_q|C|\mid \emptyset\neq C\subseteq G_q^n,\ R_t(C)\leq r\}.
\]
For \(\rho\in[0,1]\), the optimal asymptotic rate for unrestricted codes is
\[
\kappa_t(\rho,q):=\liminf_{n\to\infty}\frac{k_t(n,\rho n,q)}{n}.
\]

When \(q\) is a prime power, define in addition
\[
k_t^{\mathrm{Lin}}(n,r,q):=\min
\{\dim_{\F_q}C\mid C\leq\F_q^n,\ R_t(C)\leq r\}
\]
and
\[
\kappa_t^{\mathrm{Lin}}(\rho,q):=
\liminf_{n\to\infty}\frac{k_t^{\mathrm{Lin}}(n,\rho n,q)}{n}.
\]
Since every linear code is an unrestricted code,
\(\kappa_t(\rho,q)\leq\kappa_t^{\mathrm{Lin}}(\rho,q)\).
\end{definition}

Although the definitions use group notation, $d^{(t)}(\mathbf v,\mathbf u)$
is simply the number of columns in which $\mathbf v$ and $\mathbf u$
differ.  Consequently, $R_t(C)$, $k_t(n,r,q)$, and $\kappa_t(\rho,q)$
depend only on the alphabet size $q$, not on the chosen group structure.

Since the distance is integer-valued, a nonintegral value of $\rho n$
causes no ambiguity.

\paragraph{Extension-Field Interpretation}
When \(q\) is a prime power, the preceding matrix model has an equivalent
formulation in the ordinary Hamming space over the extension field
\(\F_{q^t}\); see \cite[Lemma~9]{elimelech-firer-schwartz}.  Fix an
\(\F_q\)-basis
\(\beta_1,\ldots,\beta_t\) of \(\F_{q^t}\) and consider the
\(\F_q\)-linear bijection
\[
\begin{aligned}
\Phi&:\F_q^{t\times n}\longrightarrow\F_{q^t}^n,\\
\Phi(\mathbf v)&:=\sum_{i=1}^t\beta_i\overline v_i,
\end{aligned}
\]
where \(\overline v_i\) is the \(i\)-th row of \(\mathbf v\).
A column of \(\mathbf v\) is zero if and only if the corresponding
coordinate of \(\Phi(\mathbf v)\) is zero.  Hence \(\Phi\) is an isometry
from \(d^{(t)}\) to the Hamming metric on \(\F_{q^t}^n\).  Under this
isometry, if \(C\leq\F_q^n\), \(k=\dim_{\F_q}C\), and
\(G\in\F_q^{k\times n}\) is a generator matrix of \(C\), then
\[
\Phi(C^t)=\operatorname{span}_{\F_{q^t}}(C)=\F_{q^t}^kG.
\]
A full-rank minor of \(G\) over \(\F_q\) remains nonzero over
\(\F_{q^t}\).  Consequently,
\[
\begin{aligned}
\dim_{\F_{q^t}}\operatorname{span}_{\F_{q^t}}(C)&=k,\\
R_t(C)&=R_1\bigl(\operatorname{span}_{\F_{q^t}}(C)\bigr).
\end{aligned}
\]
Thus, the linear problem is equivalently an ordinary covering problem over
\(\F_{q^t}\) in which the covering code must admit a generator matrix with
entries in \(\F_q\).

\paragraph{Entropy and the Volume Bound}
Throughout, entropies denoted by $H(\cdot)$, including joint and
conditional entropies, use logarithms to base $q$. By contrast, the logarithms in the
$Q$-ary entropy function $H_Q$ are taken to base $Q$.
For a probability distribution $P$ on a finite set, write
\[
H(P):=-\sum_a P(a)\log_q P(a).
\]

Since the $t$-weight counts nonzero columns, every $t$-ball of radius
$\rho n$ has cardinality
$|B_{\rho n}^{(t)}(\mathbf v)|
=\sum_{j=0}^{\lfloor\rho n\rfloor}\binom nj(q^t-1)^j$.
Thus, the standard entropy estimate for binomial sums gives, for fixed
$\rho\in[0,1]$,
\[
\left|B_{\rho n}^{(t)}(\mathbf v)\right|
=
\begin{cases}
q^{tnH_{q^t}(\rho)+o(n)},
&0\leq\rho\leq1-q^{-t},\\
q^{tn+o(n)},
&1-q^{-t}<\rho\leq1.
\end{cases}
\]

We next present the sphere-covering lower bound common to the unrestricted
and linear settings.
If $R_t(C)\leq\rho n$, then the $|C|^t$ balls centered at the elements of
$C^t$ cover $G_q^{t\times n}$.  Since these balls all have the same size,
for every $\mathbf v\in G_q^{t\times n}$ we have
$|C|^t\left|B_{\rho n}^{(t)}(\mathbf v)\right|\geq q^{tn}$.
Together with the volume estimate above, this gives the standard lower
bound
\begin{equation}\label{eq:lower}
	\kappa_t(\rho,q)\geq
	\begin{cases}
		1-H_{q^t}(\rho), & 0\leq \rho<1-q^{-t},\\
		0, & 1-q^{-t}\leq \rho\leq 1.
	\end{cases}
\end{equation}

When \(q\) is a prime power, the inclusion of the linear codes among all
codes and \eqref{eq:lower} also give
\begin{equation}\label{eq:volume-lower-bound}
\kappa_t^{\mathrm{Lin}}(\rho,q)\geq\kappa_t(\rho,q)
\geq1-H_{q^t}(\rho),\qquad 0\leq\rho<1-q^{-t}.
\end{equation}

\section{A Balanced Exact-Type Error Family}
\label{sec:common}

For a target matrix $\mathbf v\in G_q^{t\times n}$ and a center
$\mathbf c\in C^t$, we regard the difference
$\mathbf e=\mathbf v-\mathbf c$ as an error matrix.  The condition
$R_t(C)\leq\theta n$ is equivalent to requiring that every target matrix $\mathbf v $
admit a representation $\mathbf v =\mathbf c+\mathbf e$ with $ \mathbf c \in C^t$ and $\wt^{(t)}(\mathbf e)\leq\theta n$.  In other
words, the translates by the centers in $C^t$ of the $t$-metric ball
\[
\mathcal B_t(\theta n):=
\{\mathbf e\in G_q^{t\times n}:\wt^{(t)}(\mathbf e)\leq\theta n\}
\]
cover the entire space $G_q^{t\times n}$.

In both probabilistic constructions, rather than using the full ball
$\mathcal B_t(\theta n)$, we work with a highly structured error family
$\mathcal E_n\subseteq\mathcal B_t(\theta n)$.  For admissible lengths
$n$, every matrix in $\mathcal E_n$ has exactly $(1-\theta)n$ zero
columns, while each nonzero vector of $G_q^t$ occurs exactly
$\theta n/(q^t-1)$ times as a column.  Thus, $\mathcal E_n$ is a balanced
exact-type class consisting of matrices of $t$-weight exactly $\theta n$.
Requiring the translates $C^t+\mathcal E_n$ to cover the ambient space is
stronger than ordinary radius-$\theta n$ covering.  The advantage is that
$\mathcal E_n$ has the same exponential size as the full ball, while its
balanced structure bounds how many pairs have any prescribed difference
on selected rows.

We now formalize this choice.  Fix integers $q\geq2$ and $t\geq2$ and a
rational number
$0<\theta<1-q^{-t}$,
and put $h:=H_{q^t}(\theta)$.  Define a distribution $P_\theta$ on
$G_q^t$ by
\[
P_\theta(\overline 0)=1-\theta,
\qquad
P_\theta(\overline a)=\frac{\theta}{q^t-1}
\quad(\overline a\neq\overline 0).
\]
Its entropy is
\begin{equation}\label{eq:full-entropy}
\begin{aligned}
H(P_\theta)
&=-(1-\theta)\log_q(1-\theta)
-\theta\log_q\frac{\theta}{q^t-1}\\
&=tH_{q^t}(\theta)=th.
\end{aligned}
\end{equation}

Call a positive integer $n$ admissible if
$nP_\theta(\overline a)$ is an integer for every
$\overline a\in G_q^t$.  Since $\theta$ is rational, every sufficiently
large multiple of a fixed positive integer is admissible.  For such $n$,
define the exact column-type class
\[
\mathcal E_n:=\left\{\mathbf e\in G_q^{t\times n}:
\left|\{j\in[n]:\overline e^{(j)}=\overline a\}\right|
=nP_\theta(\overline a)\text{ for every }\overline a\in G_q^t\right\}.
\]
Every $\mathbf e\in\mathcal E_n$ has exactly $\theta n$ nonzero columns.
When $q$ is a prime power,
the class is invariant under left multiplication by every $U\in\GL_t(q)$,
because $U$ fixes zero and permutes the nonzero vectors of $\F_q^t$.

Let $X=(X_1,\ldots,X_t)\sim P_\theta$.  For every nonempty
$S\subseteq[t]$, let $P_{\theta,S}$ be the distribution of $X_S$.

For a finite alphabet $\mathcal A$ and a sequence
$\overline x=(x_1,\ldots,x_m)\in\mathcal A^m$, its type, or empirical
distribution, is the probability distribution
\[
\widehat P_{\overline x}(a)
:=\frac1m\bigl|\{j\in[m]:x_j=a\}\bigr|
\qquad(a\in\mathcal A).
\]
A probability distribution $Q$ on $\mathcal A$ is called an $m$-type if
$mQ(a)$ is an integer for every $a\in\mathcal A$.  Its type class is
\[
\mathcal T_m(Q):=
\{\overline x\in\mathcal A^m:\widehat P_{\overline x}=Q\}.
\]
After identifying a matrix with its ordered sequence of columns,
$\mathcal E_n$ is precisely $\mathcal T_n(P_\theta)$ on the alphabet
$G_q^t$.
We use the following standard type-class inequality (see \cite[Th.~11.1.3]{cover-thomas} and
\cite[Ch.~2, Lem.~2.3]{csiszar-korner}):
\begin{equation}\label{eq:standard-type-class}
(m+1)^{-|\mathcal A|}q^{mH(Q)}\leq |\mathcal T_m(Q)|\leq q^{mH(Q)}.
\end{equation}

For a nonempty $S\subseteq[t]$, put $s:=|S|$.  For an admissible $n$ and
$\mathbf x\in G_q^{s\times n}$, set
\[
m_S(\mathbf x):=\left|\{\mathbf e\in\mathcal E_n:\mathbf e_S=\mathbf x\}\right|.
\]
For any family $\mathcal E\subseteq G_q^{t\times n}$ and any
$\mathbf z\in G_q^{s\times n}$, define
\[
N_S(\mathcal E;\mathbf z):=
\left|\{(\mathbf e,\mathbf f)\in\mathcal E^2:
\mathbf f_S-\mathbf e_S=\mathbf z\}\right|.
\]
We call this the row-difference count at $\mathbf z$.  In particular,
$N_S(\mathcal E;\mathbf 0)$ is the row-agreement count used in the
unrestricted proof, whereas the linear covariance calculation requires
the bound for every $\mathbf z$.
The following two lemmas give the bounds on $m_S$ and $N_S$ used in both
constructions.

\begin{lemma}
\label{lem:count}
For every admissible $n$,
\begin{equation}
\label{eq:type-class-bounds}
(n+1)^{-q^t}q^{nH(P_\theta)}
\leq|\mathcal E_n|\leq q^{nH(P_\theta)}.
\end{equation}
Moreover, for every nonempty $S\subseteq[t]$ and every admissible $n$,
\begin{align}
m_S(\mathbf x)
&\leq q^{n(H(P_\theta)-H(P_{\theta,S}))}
\quad\text{for every }\mathbf x\in G_q^{|S|\times n},
\label{eq:prescribed-rows-count}\\
N_S(\mathcal E_n;\mathbf z)
&\leq q^{n(2H(P_\theta)-H(P_{\theta,S}))}
\quad\text{for every }\mathbf z\in G_q^{|S|\times n}.
\label{eq:type-class-row-difference-count}
\end{align}
\end{lemma}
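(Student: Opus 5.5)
The first bound \eqref{eq:type-class-bounds} is immediate from the standard type-class inequality \eqref{eq:standard-type-class}. Identify $\mathcal E_n$ with $\mathcal T_n(P_\theta)$ over the alphabet $\mathcal A=G_q^t$, which has $|\mathcal A|=q^t$, and take $m=n$. Admissibility of $n$ says exactly that $P_\theta$ is an $n$-type.

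For \eqref{eq:prescribed-rows-count}, fix $\mathbf x\in G_q^{s\times n}$. If no $\mathbf e\in\mathcal E_n$ has $\mathbf e_S=\mathbf x$, then $m_S(\mathbf x)=0$ and there is nothing to prove. Otherwise the column type of $\mathbf x$ is forced to be $P_{\theta,S}$, the marginal of $P_\theta$ on the coordinates $S$, because projecting the columns of $\mathbf e$ onto $S$ pushes forward the exact type $P_\theta$ to $P_{\theta,S}$. In that case I will use a symmetry argument. The symmetric group $S_n$ acts on columns. It preserves $\mathcal E_n$ and acts transitively on $\mathcal T_n(P_{\theta,S})$. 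Moreover $m_S(\sigma\mathbf x)=m_S(\mathbf x)$, since $\mathbf e\mapsto\sigma\mathbf e$ is a bijection between the two fibres. Hence $m_S$ is constant on $\mathcal T_n(P_{\theta,S})$, and summing over that type class gives
\[
m_S(\mathbf x)\,|\mathcal T_n(P_{\theta,S})|=|\mathcal E_n|.
\]
This identity alone is too lossy: combined with the lower bound in \eqref{eq:standard-type-class} it yields only a polynomial-factor version of the claim. I therefore count the fibre directly. For each $\overline b\in G_q^s$ in the support of $P_{\theta,S}$, the columns $j$ with $\overline x^{(j)}=\overline b$ number $nP_{\theta,S}(\overline b)$. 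On these columns, the remaining rows $[t]\setminus S$ must form a sequence of exact type $P_{\theta}(\cdot\mid X_S=\overline b)$. These choices are independent across $\overline b$, so $m_S(\mathbf x)$ is a product of multinomial coefficients. Applying the upper bound $|\mathcal T_m(Q)|\le q^{mH(Q)}$ to each factor gives
\[
m_S(\mathbf x)\le q^{\sum_{\overline b}nP_{\theta,S}(\overline b)H(X_{[t]\setminus S}\mid X_S=\overline b)}=q^{nH(X\mid X_S)}.
\]
Since $X_S$ is a function of $X$, $H(X\mid X_S)=H(P_\theta)-H(P_{\theta,S})$. This is exactly \eqref{eq:prescribed-rows-count}, with no polynomial loss. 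The case $S=[t]$ is consistent: then $m_S\le 1$ and the exponent is $0$.

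For \eqref{eq:type-class-row-difference-count}, fix $\mathbf z$ and sum over the first element of the pair. For each $\mathbf e\in\mathcal E_n$, the admissible $\mathbf f$ are those with $\mathbf f_S=\mathbf e_S+\mathbf z$, and there are $m_S(\mathbf e_S+\mathbf z)$ of them. Therefore
\[
N_S(\mathcal E_n;\mathbf z)=\sum_{\mathbf e\in\mathcal E_n}m_S(\mathbf e_S+\mathbf z)\le|\mathcal E_n|\max_{\mathbf x}m_S(\mathbf x)\le q^{nH(P_\theta)}\,q^{n(H(P_\theta)-H(P_{\theta,S}))},
\]
using \eqref{eq:type-class-bounds} and \eqref{eq:prescribed-rows-count}.

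The only delicate point is in \eqref{eq:prescribed-rows-count}: the factorization into conditional type classes must be used rather than the averaging identity, so that no $(n+1)^{q^t}$ factor appears. The conditional types are genuine $nP_{\theta,S}(\overline b)$-types because $nP_\theta$ is integral, and the entropy chain rule then reassembles the exponents into $H(P_\theta)-H(P_{\theta,S})$.
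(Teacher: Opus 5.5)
Your proof is correct and follows the paper's argument essentially step for step. It obtains $|\mathcal E_n|$ from the type-class bound, factors $m_S(\mathbf x)$ into multinomial coefficients indexed by $\overline b$, and reassembles the conditional type-class upper bounds into $q^{nH(X\mid X_S)}$. It then bounds $N_S(\mathcal E_n;\mathbf z)\le|\mathcal E_n|\max_{\mathbf x}m_S(\mathbf x)$; your sum over $\mathbf e$ is just the paper's $\sum_{\mathbf x}m_S(\mathbf x)m_S(\mathbf x+\mathbf z)$ grouped by $\mathbf e_S$. The symmetry/averaging detour is unnecessary but harmless, and you correctly explain why it would cost a polynomial factor.
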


\begin{proof}
Applying \eqref{eq:standard-type-class} to the column type $P_\theta$
gives \eqref{eq:type-class-bounds}.

Fix $S$ and $\mathbf x$.  If $m_S(\mathbf x)>0$, then $\mathbf x$ has
column type $P_{\theta,S}$.  For every
$\overline b\in G_q^{|S|}$, the positions in which the column of
$\mathbf x$ equals $\overline b$ must be filled by columns
$\overline a\in G_q^t$ satisfying $\overline a_S=\overline b$, using
each $\overline a$ exactly $nP_\theta(\overline a)$ times.  Hence
\[
m_S(\mathbf x)
=\prod_{\overline b\in G_q^{|S|}}
\frac{\bigl(nP_{\theta,S}(\overline b)\bigr)!}
{\displaystyle
\prod_{\substack{\overline a\in G_q^t:\\
\overline a_S=\overline b}}
\bigl(nP_\theta(\overline a)\bigr)!}.
\]
For each $\overline b$, the corresponding factor is a type class whose
type is the conditional distribution of $X$ given
$X_S=\overline b$.  Applying the type-class upper bound to every factor
and multiplying gives
\[
m_S(\mathbf x)
\leq
q^{n\sum_{\overline b}P_{\theta,S}(\overline b)
H(X\mid X_S=\overline b)}
=q^{nH(X\mid X_S)}
=q^{n(H(P_\theta)-H(P_{\theta,S}))},
\]
which proves~\eqref{eq:prescribed-rows-count}.  Finally,
\[
N_S(\mathcal E_n;\mathbf z)
=\sum_{\mathbf x\in G_q^{|S|\times n}}
m_S(\mathbf x)m_S(\mathbf x+\mathbf z)
\leq\left(\max_{\mathbf x}m_S(\mathbf x)\right)|\mathcal E_n|.
\]
Combining \eqref{eq:prescribed-rows-count} with the upper bound in
\eqref{eq:type-class-bounds} proves~\eqref{eq:type-class-row-difference-count}.
\end{proof}

In particular, \eqref{eq:full-entropy} and
\eqref{eq:type-class-bounds} give
\begin{equation}\label{eq:balanced-size}
|\mathcal E_n|=q^{nth+O(\log n)}.
\end{equation}

The distribution $P_\theta$ is invariant under coordinate permutations.
The following lemma follows from Shearer's inequality
\cite{chung-graham-frankl-shearer} and exchangeability.  We include a short
direct proof for completeness.

\begin{lemma}
\label{lem:entropy-balance}
For every nonempty subset $S\subseteq[t]$, with $s=|S|$,
\[
H(P_{\theta,S})\geq sH_{q^t}(\theta)=sh.
\]
\end{lemma}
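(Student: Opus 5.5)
By exchangeability, $H(P_{\theta,S})$ depends only on $s=|S|$. Write $g(s):=H(P_{\theta,S})$ for any $S$ with $|S|=s$, and set $g(0):=0$. Then \eqref{eq:full-entropy} reads $g(t)=th$. The plan is to show that $g(s)/s$ is nonincreasing in $s$. This gives $g(s)/s\geq g(t)/t=h$, which is exactly the claim.

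The monotonicity follows from Shearer's inequality applied within a fixed set of coordinates. Fix $T\subseteq[t]$ with $|T|=s+1$, where $1\leq s\leq t-1$. The $s+1$ subsets of $T$ of size $s$ cover every element of $T$ exactly $s$ times. Shearer's inequality for the random vector $X_T$ therefore gives
\[
s\,H(X_T)\leq\sum_{i\in T}H(X_{T\setminus\{i\}}),
\]
which in the notation above is $s\,g(s+1)\leq(s+1)g(s)$, or $g(s+1)/(s+1)\leq g(s)/s$. Chaining these inequalities from $s$ up to $t$ yields $g(s)/s\geq g(t)/t=h$.

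For completeness, the Shearer step can be proved directly from the chain rule and the fact that conditioning does not increase entropy. Write $T=\{i_1<\dots<i_{s+1}\}$. For each $i_k$, the chain rule gives
\[
H(X_T)=H(X_{T\setminus\{i_k\}})+H(X_{i_k}\mid X_{T\setminus\{i_k\}}),
\]
and conditioning on more variables only lowers entropy, so
\[
H(X_{i_k}\mid X_{T\setminus\{i_k\}})\leq H(X_{i_k}\mid X_{i_1},\dots,X_{i_{k-1}}).
\]
Summing over $k$, and using that the right-hand terms sum to $H(X_T)$ by the chain rule, gives $(s+1)H(X_T)\leq\sum_k H(X_{T\setminus\{i_k\}})+H(X_T)$. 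This is precisely the displayed inequality. Here exchangeability is only used to identify all the $s$-marginal entropies with the single value $g(s)$; the permutation invariance of $P_\theta$ makes this immediate.

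There is no serious obstacle. The only points needing care are the bookkeeping in the chain-rule inequality and the observation that all entropies are in base $q$, so that $H(P_\theta)=th$ matches the normalization in \eqref{eq:full-entropy}.
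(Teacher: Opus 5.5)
Your proof is correct and uses the same ingredients as the paper's: Shearer's inequality, proved from the chain rule and the fact that conditioning does not increase entropy, together with exchangeability of $P_\theta$. The difference is only in organization. The paper applies Shearer once to the family of all $s$-subsets of $[t]$, getting $\binom{t}{s}g(s)\geq\binom{t-1}{s-1}g(t)$ directly. You apply it to the $s$-subsets of an $(s+1)$-set (Han's inequality) and telescope, which yields the slightly stronger intermediate fact that $g(s)/s$ is nonincreasing in $s$, a non-strict version of the monotonicity implicit in the Appendix.
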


\begin{proof}
Let $\binom{[t]}s$ denote the family of all $s$-subsets of $[t]$.
For $A\in\binom{[t]}s$, the entropy chain rule and the fact that
conditioning cannot increase entropy give
\[
H(X_A)\geq\sum_{i\in A}H(X_i\mid X_1,\ldots,X_{i-1}).
\]
Summing over $A\in\binom{[t]}s$ and observing that every $i\in[t]$
belongs to $\binom{t-1}{s-1}$ such subsets yields the specialization of
Shearer's inequality to all $s$-subsets of $[t]$; see
\cite[p.~33, Eq.~(22)]{chung-graham-frankl-shearer}:
\[
\sum_{A\in\binom{[t]}s}H(X_A)\geq\binom{t-1}{s-1}H(X).
\]
By exchangeability, the left-hand side equals
$\binom tsH(P_{\theta,S})$.  Since $H(X)=th$ by
\eqref{eq:full-entropy}, it follows that
\[
H(P_{\theta,S})\geq
\frac{\binom{t-1}{s-1}}{\binom ts}H(X)=sh.
\]
\end{proof}

Combining Lemmas~\ref{lem:count} and~\ref{lem:entropy-balance}, for every
nonempty proper subset $S\subsetneq[t]$, with $s=|S|$, and every
$\mathbf z\in G_q^{s\times n}$, we obtain
\begin{equation}\label{eq:balanced-row-difference-count}
N_S(\mathcal E_n;\mathbf z)
\leq q^{n(2t-s)H_{q^t}(\theta)}=q^{n(2t-s)h}.
\end{equation}

\section{The Unrestricted Case}
\label{sec:nonlinear-proof}

By~\eqref{eq:lower}, only the matching upper bound in
Theorem~\ref{thm:general-rate} remains to be proved.  The cases \(t=1,2\)
are known \cite{Stein1974,Lovasz1975,elimelech-schwartz}, so throughout
this section we fix an integer \(t\geq3\).  We first show that the full
metric ball can make even the extended Janson
bound ineffective.  We then use the balanced exact-type family constructed
in Section~\ref{sec:common} to overcome this obstruction.

\subsection{The Janson Setup and the Row-Agreement Obstruction}
\label{subsec:janson-obstruction}

We use the following combined consequence of the ordinary and extended
forms of Janson's inequality; see
\cite[Theorems~8.1.1 and~8.1.2]{alon-spencer}.

\begin{lemma}[Janson's inequality]
\label{lem:janson}
Let $\Omega$ be a finite set, and let $R$ be a random subset of $\Omega$
obtained by including its elements mutually independently, possibly with
different probabilities.  Let $\mathcal H$ be a finite family of subsets
of $\Omega$.  For every $E\in\mathcal H$, set
$I_E:=\ind_{\{E\subseteq R\}}$ and
$X:=\sum_{E\in\mathcal H}I_E$.  Let $\mu:=\Exp[X]$ and
\[
\Delta:=
\sum_{\substack{E,F\in\mathcal H,\ E\neq F\\
E\cap F\neq\emptyset}}
\Exp[I_EI_F],
\]
where the latter sum is over ordered pairs.  If $\mu>0$, then
\begin{equation}
\label{eq:combined-janson}
\Prob[X=0]\leq\exp\left(-\frac{\mu^2}{2(\mu+\Delta)}\right).
\end{equation}
\end{lemma}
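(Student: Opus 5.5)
The plan is to deduce the combined bound \eqref{eq:combined-janson} directly from the two forms of Janson's inequality in \cite[Theorems~8.1.1 and~8.1.2]{alon-spencer}, by splitting into the cases $\Delta<\mu$ and $\Delta\geq\mu$ and comparing exponents. No new probabilistic estimate is needed; the lemma is an elementary repackaging.

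First I would check that our setting matches that of \cite{alon-spencer}. There, $R$ is a random subset of a finite ground set with independent inclusions, possibly with different probabilities. The events are $B_E=\{E\subseteq R\}$ for $E\in\mathcal H$, and $E\sim F$ means $E\neq F$ and $E\cap F\neq\emptyset$. The quantity $\Delta$ is defined as $\sum_{E\sim F}\Prob[B_E\wedge B_F]$ over ordered pairs, which coincides with our $\Delta$ since $\Exp[I_EI_F]=\Prob[B_E\wedge B_F]$. Also $\{X=0\}=\bigwedge_E\overline{B_E}$. With this identification, the two cited theorems give:
\begin{itemize}
\item the basic form, $\Prob[X=0]\leq\exp(-\mu+\Delta/2)$;
\item the extended form, $\Prob[X=0]\leq\exp\bigl(-\mu^2/(2\Delta)\bigr)$ whenever $\Delta\geq\mu$.
\end{itemize}

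\emph{Case $\Delta\geq\mu$.} Here $\Delta>0$, because $\mu>0$. Since $\Delta\leq\mu+\Delta$, we have $\mu^2/(2\Delta)\geq\mu^2/(2(\mu+\Delta))$, so the extended form implies \eqref{eq:combined-janson}.

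\emph{Case $0\leq\Delta<\mu$.} It suffices to show
\[
\mu-\frac{\Delta}{2}\geq\frac{\mu^2}{2(\mu+\Delta)}.
\]
Multiplying by the positive quantity $2(\mu+\Delta)$, this is equivalent to
\[
(2\mu-\Delta)(\mu+\Delta)\geq\mu^2,
\]
that is, $\mu^2+\mu\Delta-\Delta^2\geq0$. This holds because $\mu^2\geq\Delta^2$ and $\mu\Delta\geq0$. The basic form then gives \eqref{eq:combined-janson}.

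The only point requiring care is the convention for $\Delta$: ordered versus unordered pairs, and whether diagonal terms are included. A factor of $2$ here would change the constants in both cited bounds, so I would verify that the statements in \cite{alon-spencer} use exactly the ordered-pair sum excluding $E=F$, as in our definition. Beyond this bookkeeping there is no real obstacle.
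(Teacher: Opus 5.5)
Your proposal is correct and is essentially the paper's argument. The paper justifies the lemma in one line by applying the ordinary Janson inequality when $\Delta<\mu$ and the extended one when $\Delta\geq\mu$; your reduction of $\mu-\Delta/2\geq\mu^2/(2(\mu+\Delta))$ to $\mu^2+\mu\Delta-\Delta^2\geq0$ supplies the elementary comparison the paper leaves implicit, and your reading of the convention for $\Delta$ in \cite{alon-spencer} (ordered pairs with $E\neq F$) is correct.
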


This follows by using the ordinary Janson inequality when $\Delta<\mu$
and the extended Janson inequality when $\Delta\geq\mu$.

To describe the construction in the nontrivial range, fix
$0<\rho<1-q^{-t}$ and a rational number $0<\rho_0<\rho$.  Here $\rho$ is
the desired normalized covering radius, while the candidate error
matrices will be restricted to have $t$-weight at most $\rho_0n$.  In
Subsection~\ref{subsec:completion}, $\rho_0$ will be chosen sufficiently
close to $\rho$.  Fix also a real number
$0<\eta<H_{q^t}(\rho_0)$, and set
\begin{equation}\label{eq:setprobability}
p:=q^{-n(H_{q^t}(\rho_0)-\eta)}.
\end{equation}
Independently choose random subsets $C_1,\ldots,C_t\subseteq G_q^n$ by
including every vector in every $C_i$ with probability $p$, with all
membership choices mutually independent, and set
$C:=C_1\cup\cdots\cup C_t$.

Fix a target matrix $\mathbf v\in G_q^{t\times n}$ with rows
$\overline v_1,\ldots,\overline v_t$, and let
$\mathcal E\subseteq G_q^{t\times n}$ be a family of error matrices, each
of $t$-weight at most $\rho_0n$.  On the ground set $[t]\times G_q^n$, define
\[
R:=\{(i,\overline x)\in[t]\times G_q^n\mid\overline x\in C_i\}.
\]
For $\mathbf e\in\mathcal E$ with rows
$\overline e_1,\ldots,\overline e_t$, set
\[
E_{\mathbf e}:=\{(i,\overline v_i-\overline e_i)\mid i\in[t]\} 
\quad \textrm{and} \quad I_{\mathbf e}:=\ind_{\{E_{\mathbf e}\subseteq R\}}.\]
Let 
$$\mathcal H_{\mathbf v}:= \{E_{\mathbf e}\mid \mathbf e\in\mathcal E \}\quad  \textrm{and} \quad X_{\mathbf v}:=\sum_{\mathbf e\in\mathcal E}I_{\mathbf e}.$$
If $X_{\mathbf v}>0$, then
$\mathbf v-\mathbf e\in C_1\times\cdots\times C_t\subseteq C^t$ for some
$\mathbf e\in\mathcal E$, so $\mathbf v$ is covered within distance
$\rho_0n$.  Moreover,
\begin{equation}\label{eq:janson-mean-general}
\mu:=\Exp[X_{\mathbf v}]=|\mathcal E|p^t.
\end{equation}
Denote the corresponding dependency sum in Lemma~\ref{lem:janson} by
$\Delta_{\mathbf v}$.

For each nonempty $S\subseteq[t]$, consider the row-agreement count
$N_S(\mathcal E;\mathbf 0)$ defined in Section~\ref{sec:common}.
Take two distinct error matrices
$\mathbf e,\mathbf f\in\mathcal E$, with associated 
$E_{\mathbf e}, E_{\mathbf f}\in \mathcal H_{\mathbf v}$.  Suppose that they
agree precisely on the rows indexed by a nonempty set
$S\subsetneq[t]$; that is,
$\overline e_i=\overline f_i$ if and only if $i\in S$.  Writing
$s=|S|$, we then have
\[
E_{\mathbf e}\cap E_{\mathbf f}
=\{(i,\overline v_i-\overline e_i)\mid i\in S\},
\qquad |E_{\mathbf e}\cap E_{\mathbf f}|=s.
\]
Consequently,
$\Exp[I_{\mathbf e}I_{\mathbf f}]
=\Prob[E_{\mathbf e}\cup E_{\mathbf f}\subseteq R]=p^{2t-s}$.
Grouping the ordered pairs according to their common rows
therefore gives the bound
\begin{equation}\label{eq:janson-dependency-general}
\Delta_{\mathbf v}\leq
\sum_{\emptyset\neq S\subsetneq[t]}
N_S(\mathcal E;\mathbf 0)p^{2t-|S|}.
\end{equation}
This upper bound includes the diagonal pairs in
$N_S(\mathcal E;\mathbf 0)$, although they do not occur in
$\Delta_{\mathbf v}$, and it also overcounts pairs that agree on rows
outside $S$.  Indeed, a distinct pair whose exact common-row set is $T$
is counted in $N_S(\mathcal E;\mathbf 0)$ for every nonempty
$S\subseteq T$.  These extra nonnegative terms only enlarge the
right-hand side of~\eqref{eq:janson-dependency-general}.

To see the obstruction for the natural unstructured family, take
$\mathcal E$ to be the entire $t$-ball of admissible errors:
\[
\mathcal E=\mathcal B_t(\rho_0n).
\]
For lengths such that $\rho_0n$ is an integer, its cardinality has the
sphere-volume exponent, i.e.,
\[
|\mathcal B_t(\rho_0n)|=q^{ntH_{q^t}(\rho_0)+o(n)},
\qquad \mu=|\mathcal B_t(\rho_0n)|p^t=q^{nt\eta+o(n)}.
\]
The problem is that the ball also contains exponentially large subfamilies
supported only on a proper subset of the rows.  Fix a nonempty proper
subset $S\subsetneq[t]$, write $s=|S|$, and let
\[
\mathcal B_S:=\{\mathbf e\in\mathcal B_t(\rho_0n)\mid\mathbf e_S=\mathbf 0\}.
\]
Thus, $\mathcal B_S$ consists of matrices supported only on the rows
outside $S$.
For $\rho_0<1-q^{-(t-s)}$, the remaining $t-s$ rows form a ball, so
\[
|\mathcal B_S|=q^{n(t-s)H_{q^{t-s}}(\rho_0)+o(n)}.
\]
For any two distinct errors $\mathbf e,\mathbf f\in\mathcal B_S$, the
associated $E_{\mathbf e}$ and $ E_{\mathbf f}$ share at least the $s$ elements
$\{(i,\overline v_i)\mid i\in S\}$.  If they agree on exactly $r\geq s$
rows, then
\[
\Exp[I_{\mathbf e}I_{\mathbf f}]=p^{2t-r}\geq p^{2t-s},
\]
where the inequality uses $0<p<1$.  Restricting the ordered-pair sum
defining $\Delta_{\mathbf v}$ to the distinct ordered pairs from
$\mathcal B_S$ therefore gives
\[
\Delta_{\mathbf v}
\geq
\sum_{\substack{\mathbf e,\mathbf f\in\mathcal B_S\\
\mathbf e\neq\mathbf f}}
\Exp[I_{\mathbf e}I_{\mathbf f}]
\geq |\mathcal B_S|(|\mathcal B_S|-1)p^{2t-s}.
\]
Consequently,
\[
\frac1n\log_q\frac{\Delta_{\mathbf v}}{\mu^2}
\geq 2(t-s)H_{q^{t-s}}(\rho_0)
-(2t-s)H_{q^t}(\rho_0)-s\eta+o(1).
\]
Put $D_s(x):=2(t-s)H_{q^{t-s}}(x)-(2t-s)H_{q^t}(x)$.
As $x\to0$,
$D_s(x)=\frac{s}{t}x\log_q\frac1x+O(x)$.
Thus $D_s(x)>0$ for every sufficiently small positive $x$.  Fix such a
$\rho_0$ and choose $0<\eta<D_s(\rho_0)/(2s)$.  Then
\[
\frac{\Delta_{\mathbf v}}{\mu^2}
\geq q^{nD_s(\rho_0)/2+o(n)},\qquad
\frac{\mu^2}{\mu+\Delta_{\mathbf v}}
\leq \frac{\mu^2}{\Delta_{\mathbf v}}\longrightarrow0.
\]

Therefore, even the extended Janson bound is ineffective for the full
ball in this range.  Since attaining the sharp rate requires allowing
$\eta$ to be arbitrarily small, this is a genuine obstruction to using
the unstructured ball throughout the full parameter range.  The
obstruction comes from exponentially large subfamilies whose matrices are
zero on a fixed nonempty set of rows, not from a loss in the ball's total
cardinality.  We therefore retain the optimal exponential cardinality but
restrict the candidate errors to the balanced exact-type family
$\mathcal E_n$.

\subsection{Completion of the Proof}
\label{subsec:completion}

\begin{proof}[Proof of Theorem~\ref{thm:general-rate}]
Since the cases $t=1,2$ are already known, it remains to prove the upper
bound for the fixed $t\geq3$.

First suppose that $\rho=0$.  The condition $R_t(C)=0$ implies
$C^t=G_q^{t\times n}$ and hence $C=G_q^n$.  Thus
$k_t(n,0,q)=n$ and $\kappa_t(0,q)=1$.

Now let $0<\rho<1-q^{-t}$ and fix $\delta>0$.  Choose a rational number
$\rho_0$ such that $0<\rho_0<\rho$ and
$H_{q^t}(\rho_0)>H_{q^t}(\rho)-\delta/2$.
Such a choice is possible by the continuity and strict monotonicity of
$H_{q^t}$ on $[0,1-q^{-t}]$.
Use the family $\mathcal E_n$ from Section~\ref{sec:common} with
$\theta=\rho_0$.  Choose
\[
0<\eta<\min\left\{\frac{\delta}{4},
\frac{H_{q^t}(\rho_0)}{2}\right\}.
\]
The two upper bounds respectively control the loss in the final rate and
ensure that $0<p<1$.

For every admissible value of $n$, perform
the random construction from Subsection~\ref{subsec:janson-obstruction}
with $\mathcal E=\mathcal E_n$.  By~\eqref{eq:setprobability}, for every
$i\in[t]$, we have
$\Exp[|C_i|]=q^np=q^{n(1-H_{q^t}(\rho_0)+\eta)}$.
Since
$|C|\leq\sum_{i=1}^t|C_i|$, we have
$\Exp[|C|]\leq tq^{n(1-H_{q^t}(\rho_0)+\eta)}$.
Markov's inequality gives
  \begin{equation}\label{eq:code-size-probability}
  \Prob\left[|C|>q^{n(1-H_{q^t}(\rho_0)+2\eta)}\right]
  \leq\frac{\sum_{i=1}^t\Exp[|C_i|]}
  {q^{n(1-H_{q^t}(\rho_0)+2\eta)}}=tq^{-n\eta}.
  \end{equation}

For a fixed target matrix $\mathbf v$, the estimate for $|\mathcal E_n|$ in
\eqref{eq:balanced-size}, together
with~\eqref{eq:janson-mean-general}, gives
\[
\mu=|\mathcal E_n|p^t=q^{nt\eta+O(\log n)}.
\]
Taking $\mathbf z=\mathbf 0$ in
\eqref{eq:balanced-row-difference-count} and using
\eqref{eq:janson-dependency-general}, we obtain
\[
\Delta_{\mathbf v}
\leq\sum_{s=1}^{t-1}\binom tsq^{n(2t-s)\eta+O(\log n)}
\leq q^{n(2t-1)\eta+O(\log n)}.
\]
These two estimates imply
\begin{equation}
\label{eq:janson-ratio}
\frac{\mu^2}{\mu+\Delta_{\mathbf v}}\geq q^{n\eta-O(\log n)}.
\end{equation}
Indeed, the numerator is $q^{2nt\eta+O(\log n)}$, while the denominator
is at most $q^{n(2t-1)\eta+O(\log n)}$.  By~\eqref{eq:janson-ratio},
  Janson's inequality~\eqref{eq:combined-janson}
  therefore yields, for every
  $\mathbf v\in G_q^{t\times n}$,
  \[
  \Prob[X_{\mathbf v}=0]
  \leq\exp\left(-\frac{\mu^2}{2(\mu+\Delta_{\mathbf v})}\right)
  \leq\exp\left(-q^{n\eta-O(\log n)}\right).
  \]

There are $q^{tn}$ target matrices.  Hence the union bound gives
\[
\Prob\left[
\text{there exists }\mathbf v\in G_q^{t\times n}\text{ with }X_{\mathbf v}=0
\right]
\leq
q^{tn}\exp\left(-q^{n\eta-O(\log n)}\right)
\longrightarrow0.
\]
If $X_{\mathbf v}>0$ for every target matrix $\mathbf v$, then every
target matrix is covered within distance $\rho_0n<\rho n$.  The probability
of this event
failing and the probability in~\eqref{eq:code-size-probability} both tend
to zero, so their sum is less than $1$ for every sufficiently large
  admissible value of $n$.  Hence
  there exists a code $C_n\subseteq G_q^n$ such that
  \[
  R_t(C_n)\leq\rho n,
  \qquad |C_n|\leq q^{n(1-H_{q^t}(\rho_0)+2\eta)}.
  \]
Consequently,
\[
\frac{k_t(n,\rho n,q)}{n}\leq1-H_{q^t}(\rho_0)+2\eta
<1-H_{q^t}(\rho)+\delta.
\]
The admissible values of $n$ form an unbounded subsequence, which is
sufficient because $\kappa_t(\rho,q)$ is defined by a liminf over all
lengths.  Taking the liminf along this subsequence and then letting
$\delta\to0$ yields $\kappa_t(\rho,q)\leq1-H_{q^t}(\rho)$.

Finally, suppose that $1-q^{-t}\leq\rho\leq1$.  For every $\delta>0$,
choose $0<\rho'<1-q^{-t}$ such that
$1-H_{q^t}(\rho')<\delta$.
Such a choice is possible because
$H_{q^t}(x)\to1$ as $x\to(1-q^{-t})^-$.  Since
$\kappa_t(\rho,q)$ is nonincreasing in $\rho$, the upper bound just
proved gives
\[
0\leq\kappa_t(\rho,q)\leq\kappa_t(\rho',q)
\leq1-H_{q^t}(\rho')<\delta.
\]
Letting $\delta\to0$ gives $\kappa_t(\rho,q)=0$, completing the proof.
\end{proof}

\section{The Linear Case}
\label{sec:linear-proof}

Suppose that \(q\) is a prime power.  For \(t=1\), the assertion is the
classical theorem of Cohen and Frankl \cite{Cohen85}; hence the new content
is the case \(t\geq2\).  Fix a rational number
\(0<\theta<1-q^{-t}\), put \(h:=H_{q^t}(\theta)\), and use the family
\(\mathcal E_n\) from Section~\ref{sec:common}.  Recall that
\(\mathcal E_n\) is invariant under left multiplication by every
\(U\in\GL_t(q)\).

In the linear-code setting, the covariance calculation requires bounds on
pairs of errors for which the difference on selected rows is prescribed.
The bound \eqref{eq:balanced-row-difference-count} applies with
$G_q=\F_q$ for every nonempty proper subset $S\subsetneq[t]$ and every
$\mathbf z\in\F_q^{|S|\times n}$.  Indeed, the proof of
Lemma~\ref{lem:count} uses only column-type counting and does not use the
field structure.  Thus, the unrestricted proof takes
$\mathbf z=\mathbf 0$, whereas the linear proof must allow every
$\mathbf z$.

\subsection{Random Linear Codes Cover Almost All Target Matrices}
In this subsection, we show that a random linear code can cover almost all
target matrices with the error family $\mathcal E_n$.  Assume $t\geq2$,
fix an admissible $n$, and let $L=L(q,t)$ be a sufficiently
large constant.  Set
\begin{equation}\label{eq:dimension-choice}
k:=\left\lceil n(1-h)+L\log_q(n+1)\right\rceil.
\end{equation}
Put
\begin{equation}\label{eq:dimension-slack}
a_n:=k-n(1-h),\qquad
L\log_q(n+1)\leq a_n<L\log_q(n+1)+1.
\end{equation}
For all sufficiently large $n$, we have $2t\leq k<n$.  Choose
$G\in\F_q^{k\times n}$ by taking all its entries independently and
uniformly from $\F_q$, and let $C=\Row(G)$.  We do not condition on $G$
having full row rank, since $\dim_{\F_q}C\leq k$ is sufficient.

We count only representations whose coefficient matrices have full row
rank.  This restriction can only make coverage harder, while retaining
$q^{tk+O(1)}$ candidates.   Let
\[
\mathcal A_k:=\{A\in\F_q^{t\times k}:\rank A=t\}.
\]
Then for each $A\in \mathcal A_k
$,  the map
$G\mapsto AG$ is surjective, and hence \(AG\) is uniformly distributed over
\(\F_q^{t\times n}\). 
For a target matrix $\mathbf v\in\F_q^{t\times n}$, define
\[
X_{\mathbf v}:=\sum_{A\in\mathcal A_k}
\ind_{\{\mathbf v-AG\in\mathcal E_n\}}.
\]
If $X_{\mathbf v}>0$, then $\mathbf v$ lies at $t$-distance $\theta n$
from a center in $C^t$.

The next proposition shows that the probability of failure tends to zero
for every target matrix, with a bound independent of $\mathbf v$.

\begin{proposition}\label{prop:almost-covering}
If $L$ in \eqref{eq:dimension-choice} is sufficiently large, then for all
sufficiently large admissible $n$ and every
$\mathbf v\in\F_q^{t\times n}$, we have
\[
\mu:=\Exp[X_{\mathbf v}]\geq(n+1)^2,
\qquad
\frac{\operatorname{Var}(X_{\mathbf v})}{\mu^2}=O((n+1)^{-2}),
\]
where the implied constant is independent of $\mathbf v$.  Consequently,
$\Prob[X_{\mathbf v}=0]=O((n+1)^{-2})$.
\end{proposition}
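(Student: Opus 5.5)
The plan is a direct second-moment computation, organized by how the row spaces of two coefficient matrices intersect. I will handle the mean first. For each $A\in\mathcal A_k$, $AG$ is uniform on $\F_q^{t\times n}$, so $\Prob[\mathbf v-AG\in\mathcal E_n]=|\mathcal E_n|q^{-tn}$, independently of $\mathbf v$. Since $|\mathcal A_k|=\prod_{i=0}^{t-1}(q^k-q^i)\geq c_{q,t}q^{tk}$, the lower bound in \eqref{eq:type-class-bounds} together with \eqref{eq:full-entropy} gives
\[
\mu\geq c_{q,t}\,q^{tk}(n+1)^{-q^t}q^{nth-tn}=c_{q,t}\,q^{ta_n}(n+1)^{-q^t}\geq c_{q,t}(n+1)^{tL-q^t}.
\]
This is at least $(n+1)^2$ once $L$ is large.

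For the variance, I will write $\operatorname{Var}(X_{\mathbf v})\leq\sum_{(A,B)}\bigl(\Exp[I_AI_B]-\Exp[I_A]\Exp[I_B]\bigr)$ and classify ordered pairs by $d:=\dim(\Row A\cap\Row B)\in\{0,\dots,t\}$.

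\emph{Case $d=0$.} The stacked $2t\times k$ matrix has full rank $2t$, so $(AG,BG)$ is uniform on $\F_q^{2t\times n}$. The two indicators are then independent and the covariance vanishes.

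\emph{Case $d=t$.} Here $B=UA$ for some $U\in\GL_t(q)$, so there are at most $|\GL_t(q)|\,|\mathcal A_k|$ such pairs. Bounding $\Exp[I_AI_B]\leq\Exp[I_A]$ makes their total contribution at most $|\GL_t(q)|\mu$, which is $O(\mu^2(n+1)^{-2})$ because $\mu\geq(n+1)^2$.

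\emph{Case $1\leq d\leq t-1$.} This is the main step. I choose $U,W\in\GL_t(q)$ such that the first $d$ rows of $UA$ and of $WB$ coincide and span $\Row A\cap\Row B$, and such that the remaining $2(t-d)$ rows together with these $d$ rows are linearly independent. Put $S=[d]$. The rows of $UAG$ and $WBG$ are then $2t-d$ independent uniform vectors, subject only to the identity $(UAG)_S=(WBG)_S$.

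Set $\mathbf e'=U(\mathbf v-AG)$ and $\mathbf f'=W(\mathbf v-BG)$. By the $\GL_t(q)$-invariance of $\mathcal E_n$, the event $I_AI_B=1$ is the event $\mathbf e',\mathbf f'\in\mathcal E_n$. The constraint above becomes $\mathbf f'_S-\mathbf e'_S=\mathbf z$, where $\mathbf z:=(W\mathbf v-U\mathbf v)_S$ is fixed, and $(\mathbf e',\mathbf f')$ is uniform over the $q^{n(2t-d)}$ pairs satisfying this constraint. So \eqref{eq:balanced-row-difference-count} gives
\[
\Exp[I_AI_B]\leq q^{-n(2t-d)}N_S(\mathcal E_n;\mathbf z)\leq q^{-n(2t-d)(1-h)}.
\]
The number of ordered pairs $(A,B)$ with this value of $d$ is $O(q^{tk}\cdot q^{(t-d)k})$: $B$ is determined by a basis of its row space, which contains a fixed $d$-dimensional subspace of $\Row A$, plus $O(1)$ choices. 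Hence this class contributes $O\bigl(q^{(2t-d)(k-n(1-h))}\bigr)=O(q^{(2t-d)a_n})$. Dividing by $\mu^2\geq c\,q^{2ta_n}(n+1)^{-2q^t}$ gives a ratio
\[
O\bigl(q^{-da_n}(n+1)^{2q^t}\bigr)=O\bigl((n+1)^{2q^t-L}\bigr),
\]
which is $O((n+1)^{-2})$ for $L\geq 2q^t+2$.

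Summing the three cases over the finitely many values of $d$ gives $\operatorname{Var}(X_{\mathbf v})/\mu^2=O((n+1)^{-2})$. None of the constants depends on $\mathbf v$, because $\mathbf v$ enters only through $\mathbf z$, and the bound \eqref{eq:balanced-row-difference-count} is uniform in $\mathbf z$. Chebyshev's inequality then gives $\Prob[X_{\mathbf v}=0]\leq\operatorname{Var}(X_{\mathbf v})/\mu^2$. I expect the main obstacle to be the intermediate case: one has to make the change of basis by $U$ and $W$ carefully, so that the joint law is exactly uniform on the constrained pairs and the invariance of $\mathcal E_n$ turns the event into a row-difference count with a fixed (nonzero in general) $\mathbf z$.
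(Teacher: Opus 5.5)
Your proposal is correct and follows essentially the same argument as the paper. It uses the same mean bound via $|\mathcal A_k|\geq c_{q,t}q^{tk}$ and the type-class lower bound, and the same split by $\dim(\Row A\cap\Row B)$: independence at $0$, a $|\GL_t(q)|\mu$ bound at full intersection, and the $U,W$ change of basis that turns the intermediate cases into the row-difference count \eqref{eq:balanced-row-difference-count} at a fixed $\mathbf z$, with $O(q^{(2t-d)k})$ pairs, followed by Chebyshev. The only differences are cosmetic: you absorb the diagonal terms into the $d=t$ case, and you make the constant explicit ($L\geq 2q^t+2$) where the paper writes $K(q,t)$.
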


\begin{proof}
For each $A\in\mathcal A_k$, the map $G\mapsto AG$ is a surjective linear
map from $\F_q^{k\times n}$ to $\F_q^{t\times n}$.  Hence $AG$ is uniform
on $\F_q^{t\times n}$.  Put
\[
p_n:=\frac{|\mathcal E_n|}{q^{tn}}=\Prob[\mathbf v-AG\in\mathcal E_n].
\]
Moreover, by \cite[Equation~1]{fulman-goldstein2015}, we have
\[
|\mathcal A_k|=\prod_{i=0}^{t-1}(q^k-q^i)\geq c_{q,t}q^{tk}
\]
for all sufficiently large $n$, where $c_{q,t}>0$.  The lower type-class
bound in \eqref{eq:type-class-bounds} now gives
\[
\mu=|\mathcal A_k|p_n\geq c_{q,t}(n+1)^{-q^t}q^{ta_n}.
\]
By~\eqref{eq:dimension-slack}, choosing $L$ sufficiently large
makes $\mu\geq(n+1)^2$ and also
\begin{equation}\label{eq:inverse-mean}
\mu^{-1}=O((n+1)^{-2}).
\end{equation}

We now turn to the variance.  Write
$I_A:=\ind_{\{\mathbf v-AG\in\mathcal E_n\}}$.  Then
\[
\operatorname{Var}(X_{\mathbf v})
=\sum_A\operatorname{Var}(I_A)
+\sum_{A\neq B}\bigl(\Exp[I_AI_B]-p_n^2\bigr).
\]
Since each $I_A$ is Bernoulli with mean $p_n$, the diagonal sum satisfies
\[
\sum_A\operatorname{Var}(I_A)=|\mathcal A_k|p_n(1-p_n)
\leq|\mathcal A_k|p_n=\mu.
\]

It remains to estimate the off-diagonal sum, which we classify according to
the intersection dimension.
For $A,B\in\mathcal A_k$, set
\[
s:=\dim\bigl(\Row(A)\cap\Row(B)\bigr).
\]
If $s=0$, the stacked matrix with row blocks $A$ and $B$ has rank $2t$.
Therefore $(AG,BG)$ is uniform on $\F_q^{2t\times n}$, so $I_A$ and
$I_B$ are independent and $\Exp[I_AI_B]-p_n^2=0$.

If $s=t$ and $A\neq B$, then $B=UA$ for a unique
$U\in\GL_t(q)\setminus\{I_t\}$, where $I_t$ denotes the $t\times t$
identity matrix.  Conversely, every such $U$ gives one
such $B$, so for each fixed $A$ there are exactly $|\GL_t(q)|-1$ choices
for $B$.  Moreover, $I_AI_B\leq I_A$, and hence
$\Exp[I_AI_B]-p_n^2\leq\Exp[I_AI_B]\leq p_n$.  Therefore the $s=t$ part
of the covariance sum satisfies
\[
\begin{aligned}
\sum_{\substack{A,B\in\mathcal A_k,\ A\neq B\\
                  \Row(A)=\Row(B)}}
\bigl(\Exp[I_AI_B]-p_n^2\bigr)
&\leq |\mathcal A_k|\bigl(|\GL_t(q)|-1\bigr)p_n\\
&=\bigl(|\GL_t(q)|-1\bigr)\mu=O(\mu).
\end{aligned}
\]

Consider now $1\leq s\leq t-1$.  Choose an ordered basis of
$\Row(A)\cap\Row(B)$ and extend it separately to ordered bases of
$\Row(A)$ and $\Row(B)$.  Since the rows of $A$ and $B$ are bases of
their respective row spaces, there exist $U,V\in\GL_t(q)$ that transform
the original row bases into these two extended bases.  In particular, the
first $s$ rows of $UA$ and $VB$ are the same chosen basis of the
intersection.  These common rows, together with the remaining $t-s$ rows
of each of $UA$ and $VB$, form a full-row-rank matrix
$M\in\F_q^{(2t-s)\times k}$, where the common rows are listed first,
followed by the remaining rows of $UA$ and then those of $VB$.  Hence
$MG$ is uniform on $\F_q^{(2t-s)\times n}$.

Consider the joint probability
\[
\Prob[\mathbf v-AG\in\mathcal E_n,\ \mathbf v-BG\in\mathcal E_n].
\]
Since $\mathcal E_n$ is invariant under $\GL_t(q)$, the event in this
probability is equivalent to the existence of
$\mathbf e,\mathbf f\in\mathcal E_n$ satisfying
\[
U\mathbf v-UAG=\mathbf e,
\qquad V\mathbf v-VBG=\mathbf f.
\]
On the $s$ common rows, the two errors satisfy
\[
\mathbf f_{[s]}-\mathbf e_{[s]}=\mathbf z,
\qquad
\mathbf z:=(V\mathbf v)_{[s]}-(U\mathbf v)_{[s]}.
\]
This condition is precisely what makes the
first $s$ rows of $U\mathbf v-\mathbf e$ and
$V\mathbf v-\mathbf f$ coincide.  Thus
each pair $(\mathbf e,\mathbf f)\in\mathcal E_n^2$ satisfying this condition
determines the value
\[
MG=
\begin{pmatrix}
(U\mathbf v-\mathbf e)_{[s]}\\
(U\mathbf v-\mathbf e)_{[t]\setminus[s]}\\
(V\mathbf v-\mathbf f)_{[t]\setminus[s]}
\end{pmatrix}.
\]
Conversely, a value of $MG$ for which the joint event occurs uniquely
determines $(\mathbf e,\mathbf f)$.  Therefore the admissible pairs are in
bijection with the values of $MG$ for which the joint event occurs.
Consequently,
\eqref{eq:balanced-row-difference-count} gives
\[
\Prob[\mathbf v-AG\in\mathcal E_n,\ \mathbf v-BG\in\mathcal E_n]
=\frac{N_{[s]}(\mathcal E_n;\mathbf z)}{q^{(2t-s)n}}
\leq q^{n((2t-s)h-(2t-s))}.
\]

The number of ordered pairs $(A,B)\in\mathcal A_k^2$ whose row spaces intersect in dimension
$s$ is at most
\[
c'_{q,t}q^{(2t-s)k}.
\]
To see this, first choose an ordered basis for the $(2t-s)$-dimensional sum of
the two row spaces, and then choose, inside this fixed space, the two
$t$-dimensional subspaces and their ordered bases.  The latter choices
contribute only a constant depending on $q$ and $t$.

Let $J_s$ denote
the total of the joint probabilities over all ordered pairs with
intersection dimension $s$.  Multiplying these two bounds and then using
\eqref{eq:dimension-slack} gives
\[
J_s\leq c'_{q,t}q^{(2t-s)k}q^{n((2t-s)h-(2t-s))}
=q^{(2t-s)a_n+O(1)}.
\]
On the other hand, the lower bound for the mean gives
\[
\mu^2\geq c_{q,t}^2(n+1)^{-2q^t}q^{2ta_n}.
\]
The polynomial factors in these estimates can be bounded independently of
$s$ for $1\leq s\leq t-1$.  Thus there is a constant $K=K(q,t)$ such that
\begin{equation}\label{eq:normalized-intersection-contribution}
\frac{J_s}{\mu^2}\leq q^{-sa_n+K\log_q(n+1)}
\leq (n+1)^{-sL+K}.
\end{equation}
Increasing $L$ if necessary so that $L\geq K+2$, the right-hand side is
$O((n+1)^{-2})$ simultaneously for every $1\leq s\leq t-1$.

Combining the preceding bounds and using
$\Exp[I_AI_B]-p_n^2\leq\Exp[I_AI_B]$, we obtain
\begin{equation}\label{eq:variance-assembly}
\operatorname{Var}(X_{\mathbf v})\leq O(\mu)+\sum_{s=1}^{t-1}J_s.
\end{equation}
Here the $O(\mu)$ term comprises the diagonal sum and the pairs with $s=t$;
the pairs with $s=0$ have zero covariance, and the remaining pairs are
controlled after normalization by
\eqref{eq:normalized-intersection-contribution}.  Since $t$ is fixed,
dividing \eqref{eq:variance-assembly} by $\mu^2$ and using
\eqref{eq:inverse-mean} yields
\[
\frac{\operatorname{Var}(X_{\mathbf v})}{\mu^2}=O((n+1)^{-2}).
\]
Finally, Chebyshev's inequality yields
\[
\Prob[X_{\mathbf v}=0]\leq
\frac{\operatorname{Var}(X_{\mathbf v})}{\mu^2}=O((n+1)^{-2}),
\]
with constants independent of $\mathbf v$.
\end{proof}

For a realization of $G$, define the target matrices not covered by the
selected type class as
\begin{equation}\label{eq:uncovered-set}
\mathcal U(C):=\F_q^{t\times n}\setminus(C^t+\mathcal E_n).
\end{equation}
Since $X_{\mathbf v}>0$ implies
$\mathbf v\in C^t+\mathcal E_n$, membership in $\mathcal U(C)$ implies
$X_{\mathbf v}=0$.  Hence Proposition~\ref{prop:almost-covering} and
linearity of expectation give
\[
\begin{aligned}
\Exp_G\left[\frac{|\mathcal U(C)|}{q^{tn}}\right]
&=\frac1{q^{tn}}\sum_{\mathbf v\in\F_q^{t\times n}}
  \Prob_G[\mathbf v\in\mathcal U(C)]
\\
&\leq\frac1{q^{tn}}\sum_{\mathbf v\in\F_q^{t\times n}}
  \Prob_G[X_{\mathbf v}=0]\\
&=O((n+1)^{-2}).
\end{aligned}
\]
There is therefore a realization of $G$ for which
\begin{equation}\label{eq:small-uncovered-density}
\delta_n:=\frac{|\mathcal U(C)|}{q^{tn}}=O((n+1)^{-2}).
\end{equation}
The random linear code therefore covers all but a vanishing fraction of
the target matrices with the desired dimension bound.  The next subsection
converts this into complete coverage.

\subsection{Structured Alteration and Proof of the Linear Theorem}

\subsubsection{Squaring the Uncovered Density}

Keep the error family $\mathcal E_n$ fixed, and define $\mathcal U(C)$ by
\eqref{eq:uncovered-set} for every linear code $C\leq\F_q^n$.  The
following lemma isolates the density-squaring mechanism of the structured
alteration based on \cite[Proposition~21]{elimelech-firer-schwartz}.
Rather than adding one center for each remaining target matrix, the
alteration adds at most $t$ dimensions and squares the entire uncovered
density.

\begin{lemma}\label{lem:density-squaring}
Let $C\leq\F_q^n$ and
$\delta:=|\mathcal U(C)|/q^{tn}$.  There is a matrix
$Y\in\F_q^{t\times n}$ such that the linear code
$C':=C+\Row(Y)$
satisfies
\[
\dim_{\F_q}C'\leq\dim_{\F_q}C+t,
\qquad \frac{|\mathcal U(C')|}{q^{tn}}\leq\delta^2.
\]
\end{lemma}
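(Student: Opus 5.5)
Choose $Y\in\F_q^{t\times n}$ uniformly at random and average. For $C'=C+\Row(Y)$ we have $\dim_{\F_q}C'\leq\dim_{\F_q}C+t$. The real content is to show $\Exp_Y|\mathcal U(C')|\leq\delta^2q^{tn}$; a realization of $Y$ attaining at most the mean then finishes the proof.

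The key containment is
\[
(C')^t\supseteq C^t+\{W Y: W\in\F_q^{t\times t}\}\supseteq C^t\cup(C^t+Y),
\]
because each row of $\mathbf c+WY$ lies in $C+\Row(Y)$. Hence
\[
C'^t+\mathcal E_n\supseteq (C^t+\mathcal E_n)\cup(C^t+\mathcal E_n+Y),
\]
and so
\[
\mathcal U(C')\subseteq \mathcal U(C)\cap\bigl(\mathcal U(C)+Y\bigr).
\]
The second set is just the translate of the complement of $C^t+\mathcal E_n$ by $Y$, since complements commute with translations. Thus $\mathbf v\in\mathcal U(C')$ requires both $\mathbf v\in\mathcal U(C)$ and $\mathbf v-Y\in\mathcal U(C)$.

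For fixed $\mathbf v\in\mathcal U(C)$, the matrix $\mathbf v-Y$ is uniform on $\F_q^{t\times n}$. So
\[
\Prob_Y[\mathbf v-Y\in\mathcal U(C)]=\delta,
\qquad
\Exp_Y|\mathcal U(C')|\leq|\mathcal U(C)|\,\delta=\delta^2q^{tn}.
\]
Fixing a $Y$ with $|\mathcal U(C')|$ at most this expectation gives the lemma.

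The only step needing care is checking that the single shift $Y$ suffices, i.e.\ that $C^t+Y\subseteq (C')^t$. This holds because row $i$ of $\mathbf c+Y$ is $\overline c_i+\overline y_i\in C+\Row(Y)$. It uses nothing about $\mathcal E_n$ except that the error set is the same for $C$ and $C'$. The richer family $WY$, as in \cite[Proposition~21]{elimelech-firer-schwartz}, is not needed for the squaring bound.
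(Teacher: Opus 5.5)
Your proposal is correct and takes essentially the paper's approach. Both derive $\mathcal U(C')\subseteq\mathcal U(C)\cap(\mathcal U(C)+Y)$ from $C^t\cup(C^t+Y)\subseteq(C')^t$ and then average over all $Y\in\F_q^{t\times n}$. The paper phrases the average as the double count $\sum_Y|\mathcal U(C)\cap(\mathcal U(C)+Y)|=|\mathcal U(C)|^2$, which is the same computation as your expectation over uniform $Y$.
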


\begin{proof}
For every $Y\in\F_q^{t\times n}$, the set $(C')^t$ contains both $C^t$
and $C^t+Y$.  Consequently,
\[
(C')^t+\mathcal E_n
\supseteq(C^t+\mathcal E_n)\cup\bigl((C^t+\mathcal E_n)+Y\bigr).
\]
Taking complements in $\F_q^{t\times n}$ and using
$\F_q^{t\times n}\setminus((C^t+\mathcal E_n)+Y)=\mathcal U(C)+Y$ gives
\[
\mathcal U(C')\subseteq\mathcal U(C)\cap\bigl(\mathcal U(C)+Y\bigr).
\]
Averaging the intersection on the right over all $Y$ gives
\[
\frac1{q^{tn}}\sum_{Y\in\F_q^{t\times n}}
|\mathcal U(C)\cap(\mathcal U(C)+Y)|
=\frac{|\mathcal U(C)|^2}{q^{tn}}.
\]
Indeed, every ordered pair of elements of $\mathcal U(C)$ determines
exactly one translation $Y$.  Some $Y$ therefore makes the intersection
in the displayed inclusion at most
$|\mathcal U(C)|^2/q^{tn}$.  After division by $q^{tn}$, this is
$\delta^2$.
\end{proof}

\subsubsection{Completion of the Linear Case}

We first present the quantitative conclusion of the random construction
and alteration.

\begin{theorem}\label{thm:finite-length}
Let $q$ be a prime power, let $t\geq2$ be fixed, and let
$0<\theta<1-q^{-t}$ be rational.  For every sufficiently large length $n$,
there is a linear code
$C_n\leq\F_q^n$ satisfying
\[
R_t(C_n)\leq\theta n,
\qquad
\dim_{\F_q}C_n\leq n\bigl(1-H_{q^t}(\theta)\bigr)+O(\log n).
\]
The implied constant depends only on $q,t$, and $\theta$.
\end{theorem}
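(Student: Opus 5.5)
Fix a sufficiently large admissible $n$ and first apply Proposition~\ref{prop:almost-covering} together with \eqref{eq:small-uncovered-density}. This gives a linear code $C^{(0)}=\Row(G)$ with
\[
\dim_{\F_q}C^{(0)}\leq k\leq n(1-h)+L\log_q(n+1)+1
\]
and uncovered density $\delta_n\leq c(n+1)^{-2}$. For large $n$ we then have $\delta_n\leq (n+1)^{-1}<1$, and in particular $\delta_n\leq 1/2$.

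Next, iterate Lemma~\ref{lem:density-squaring} a total of $r$ times, producing codes $C^{(0)}\leq C^{(1)}\leq\cdots\leq C^{(r)}$. Each step increases the dimension by at most $t$, and the uncovered density after $r$ steps is at most $\delta_n^{2^r}$. Complete coverage $\mathcal U(C^{(r)})=\emptyset$ holds as soon as $\delta_n^{2^r}<q^{-tn}$, because $|\mathcal U|$ is then a nonnegative integer smaller than $1$. With $\delta_n\leq 1/2$ it suffices that $2^r>tn\log_2 q$, so we take $r=\lceil\log_2(tn\log_2 q)\rceil+1=O(\log n)$. The total added dimension is $rt=O(\log n)$. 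Since $\mathcal U(C^{(r)})=\emptyset$ means every $\mathbf v$ lies in $(C^{(r)})^t+\mathcal E_n$, and $\mathcal E_n\subseteq\mathcal B_t(\theta n)$, we get $R_t(C^{(r)})\leq\theta n$ and
\[
\dim C^{(r)}\leq n(1-h)+O(\log n).
\]
The constants depend only on $q$, $t$, $\theta$ (through $L$, $c_{q,t}$, $K$).

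The main obstacle is that this argument works only for admissible $n$, i.e.\ multiples of a fixed integer $d$ (the common denominator making $nP_\theta$ integral). The statement, however, demands every sufficiently large $n$. To handle a general $n$, write $n=n_0+j$ with $n_0$ admissible and $0\leq j<d$, and build the code at length $n_0$. Extend it to length $n$ as $C_n:=C_{n_0}\oplus\F_q^{j}$, appending $j$ free coordinates, which adds at most $d=O(1)$ dimensions. Any target $\mathbf v\in\F_q^{t\times n}$ splits into blocks of widths $n_0$ and $j$. The first block is covered within $\theta n_0\leq\theta n$ by a center in $C_{n_0}^t$, and the last $j$ columns are matched exactly since $(\F_q^j)^t$ is everything. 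Hence $R_t(C_n)\leq\theta n_0\leq\theta n$. The dimension bound becomes
\[
n_0(1-h)+O(\log n_0)+j\leq n(1-h)+O(\log n),
\]
so the $O(1)$ loss is absorbed. This padding is the step needing care for uniformity; it works because the row-space construction respects the direct sum, giving $(C\oplus\F_q^j)^t=C^t\oplus\F_q^{t\times j}$ columnwise.
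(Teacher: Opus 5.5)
Your proposal is correct and follows the paper's proof essentially step for step. You start from the random code with uncovered density $O((n+1)^{-2})\leq 1/2$, apply Lemma~\ref{lem:density-squaring} $\lceil\log_2(tn\log_2 q)\rceil+1=O(\log n)$ times at a cost of at most $t$ dimensions each, and then pass from admissible lengths to all sufficiently large lengths by appending $\F_q^{j}$ with $j$ bounded by the admissibility period.
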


\begin{proof}
Start with the code $C=\Row(G)$ obtained above, whose uncovered density
satisfies \eqref{eq:small-uncovered-density}.  For all sufficiently large
admissible $n$, we have $\delta_n\leq1/2$.  Apply
Lemma~\ref{lem:density-squaring} repeatedly.  After $\ell$ applications,
the uncovered density is at most
\[
\delta_n^{2^\ell}\leq2^{-2^\ell}.
\]
Choose
\[
\ell:=\left\lceil\log_2(tn\log_2q)\right\rceil+1=O(\log n).
\]
Then $q^{tn}2^{-2^\ell}<1$.  Since the number of uncovered target matrices
is an integer, no target matrix remains.  Every error in $\mathcal E_n$ has
$t$-weight $\theta n$, so the resulting code $C_n$ satisfies
$R_t(C_n)\leq\theta n$.

Each application of Lemma~\ref{lem:density-squaring} adds at most $t$
dimensions.  Using \eqref{eq:dimension-choice}, we obtain
\[
\begin{aligned}
\dim_{\F_q}C_n
&\leq k+t\ell=n(1-h)+O(\log n)\\
&=n\bigl(1-H_{q^t}(\theta)\bigr)+O(\log n).
\end{aligned}
\]
This proves the assertion at every sufficiently large admissible length.

To remove the integrality restriction, let $D$ be a positive integer such
that every multiple of $D$ is admissible.  For an arbitrary sufficiently
large length $N$, put $n=D\lfloor N/D\rfloor$ and $d=N-n<D$, and define
\[
\widetilde C_N:=C_n\times\F_q^d\leq\F_q^N.
\]
The final $d$ coordinates can match every target matrix exactly, so
$R_t(\widetilde C_N)=R_t(C_n)\leq\theta n\leq\theta N$.  Moreover,
\[
\dim_{\F_q}\widetilde C_N
\leq n(1-h)+O(\log n)+d=N(1-h)+O(\log N),
\]
because $d<D$ is bounded independently of $N$.  Relabeling $N$ as $n$
completes the proof.
\end{proof}

We now deduce Theorem~\ref{thm:linear-rate} from the finite-length
bound.

\begin{proof}[Proof of Theorem~\ref{thm:linear-rate}]
For $t=1$, the asserted formula is the classical linear-covering result of
Cohen and Frankl \cite{Cohen85}.  We therefore assume $t\geq2$.

The lower bound in the range $0\leq\rho<1-q^{-t}$ is
\eqref{eq:volume-lower-bound}.  To prove the matching linear upper bound,
first let $0<\rho<1-q^{-t}$ and choose any rational
$\theta\in(0,\rho)$.  Theorem~\ref{thm:finite-length} provides codes for
every sufficiently large block length with
\[
\frac{\dim_{\F_q}C_n}{n}
\leq1-H_{q^t}(\theta)+O\left(\frac{\log n}{n}\right),
\qquad R_t(C_n)\leq\theta n<\rho n.
\]
Hence $\kappa_t^{\mathrm{Lin}}(\rho,q)\leq1-H_{q^t}(\theta)$.  Letting
$\theta\to\rho^-$ through rational values and using continuity gives
\[
\kappa_t^{\mathrm{Lin}}(\rho,q)\leq1-H_{q^t}(\rho).
\]

For $\rho=0$, the full space $\F_q^n$ has covering radius zero and rate
one, which matches the volume bound.  If
$1-q^{-t}\leq\rho\leq1$, choose rational
$\theta<1-q^{-t}$ arbitrarily close to $1-q^{-t}$.  Then
$\theta<\rho$ and $H_{q^t}(\theta)$ tends to one, so
Theorem~\ref{thm:finite-length} gives linear codes of arbitrarily small
asymptotic rate.  Thus,
$\kappa_t^{\mathrm{Lin}}(\rho,q)=0$ throughout this range.
This completes the proof.
\end{proof}

\section{Concluding Remarks}\label{sec:conclusion}

We determined the optimal asymptotic rate of generalized covering codes
for every alphabet size and fixed order, and proved that the same rate is
attained by linear codes over every finite field.  Thus, both the
product-form constraint on the covering centers and the additional
linearity constraint are asymptotically cost-free.  This resolves both
problems posed by Elimelech and Schwartz
\cite{elimelech-schwartz}.

The two proofs use the balanced exact-type family in different ways.  Its
row-agreement bounds support the Janson argument for
unrestricted codes, while its row-difference bounds support the
second-moment argument for linear codes.

Both proofs use only the non-strict normalized marginal-entropy bound in
Lemma~\ref{lem:entropy-balance}.  In fact, the
Appendix shows that, if $X\sim P_\theta$, then
\[
\frac{H(X_S)}{|S|}>\frac{H(X)}{t}
\qquad(\emptyset\neq S\subsetneq[t]).
\]
Although unnecessary for the main results, this strict entropy gap permits
the use of the ordinary Janson inequality in the unrestricted proof and
may be useful in related problems.

A natural open problem is to give explicit constructions of generalized
covering codes attaining the optimal rate in both the unrestricted and
linear settings.

\appendix

\section{A Strict Normalized Marginal-Entropy Gap}
\label{app:strict-gap}

We prove the strict strengthening of Lemma~\ref{lem:entropy-balance} and
then explain how it gives an alternative treatment of the Janson step in
the unrestricted proof.

\begin{proposition}
Let $q\geq2$, $t\geq2$, and $0<\theta<1-q^{-t}$, and let
$X=(X_1,\ldots,X_t)$ have the distribution $P_\theta$ defined in
Section~\ref{sec:common}.  Then, for every nonempty proper
subset $S\subsetneq[t]$, we have
$H(X_S)/|S|>H(X)/t$.
\end{proposition}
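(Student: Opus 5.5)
The plan is to sharpen the chain-rule argument behind Lemma~\ref{lem:entropy-balance} by looking at the conditional increments directly instead of summing over all $s$-subsets. Since $P_\theta$ is invariant under coordinate permutations, $H(X_S)$ depends only on $s=|S|$, so it suffices to treat $S=[s]$. Put
\[
g_i:=H(X_i\mid X_1,\ldots,X_{i-1}),\qquad i\in[t],
\]
so that $H(X_{[s]})=\sum_{i\le s}g_i$ and $H(X)=\sum_{i\le t}g_i$.

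\textbf{Step 1: the increments are nonincreasing.} Since conditioning cannot increase entropy,
\[
g_{i+1}=H(X_{i+1}\mid X_{[i]})\le H(X_{i+1}\mid X_{[i-1]}).
\]
Exchangeability lets us swap the coordinates $i$ and $i+1$, which turns the right-hand side into $g_i$. Hence $g_1\ge g_2\ge\cdots\ge g_t$. For a nonincreasing sequence and $1\le s<t$, the average of the first $s$ terms is at least the average of the last $t-s$ terms. Equality forces $g_1=\cdots=g_s=g_{s+1}=\cdots=g_t$, because $g_s\ge g_{s+1}$ sits between the two blocks. Therefore $H(X_{[s]})/s\ge H(X)/t$, with equality only if all the $g_i$ are equal.

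\textbf{Step 2: reduce strictness to dependence.} If all $g_i$ were equal, then $g_t=g_1=H(X_1)$. Then $H(X)=tH(X_1)=\sum_iH(X_i)$, since the marginals coincide by exchangeability. This equality holds exactly when $X_1,\ldots,X_t$ are mutually independent. So it suffices to show that the coordinates of $X\sim P_\theta$ are not independent.

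\textbf{Step 3: dependence, the only real content.} Suppose the coordinates were independent with marginals $P_i$. Take $t\ge2$, any $b\neq0$ and any $c\in G_q$, and compare the vectors $(b,0,0,\ldots,0)$ and $(b,c,0,\ldots,0)$. When $c\neq0$ both are nonzero vectors, so both have probability $\theta/(q^t-1)>0$. Their product-form probabilities differ only in the second factor, giving $P_2(c)=P_2(0)$ for every $c\neq0$. Hence $X_2$ is uniform on $G_q$, and by exchangeability every $X_i$ is uniform. Then $X$ is uniform on $G_q^t$, so $P_\theta(\overline0)=q^{-t}$, i.e.\ $\theta=1-q^{-t}$. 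This contradicts the hypothesis $\theta<1-q^{-t}$.

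Combining the three steps gives $H(X_S)/|S|>H(X)/t$ for every nonempty proper $S$. I expect the only delicate point to be the strictness bookkeeping in Step~1, namely that a nonconstant nonincreasing sequence gives a strict inequality for every $s<t$ and not only for some $s$. The averaging remark handles this, so I do not anticipate a genuine obstacle.
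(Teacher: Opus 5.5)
Your proof is correct. It uses the same skeleton as the paper: chain-rule increments $g_i=a_i$, the exchangeability identity $H(X_{i+1}\mid X_{[i-1]})=H(X_i\mid X_{[i-1]})$, and a comparison of the average of the first $s$ increments with the average of all $t$. Where you depart from the paper is in how you obtain strictness.

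\emph{The paper's route.} It writes $a_m-a_{m+1}=I(X_m;X_{m+1}\mid X_{<m})$ and proves that every one of these gaps is positive. To do so it represents $P_\theta$ as a mixture governed by a Bernoulli($\alpha$) switch $Z$ with $\alpha=\theta/(1-q^{-t})$. It then shows that $X_m$ and $X_{m+1}$ are conditionally dependent on the all-zero prefix event $E_m$. This yields the strictly decreasing chain $a_1>\cdots>a_t$.

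\emph{Your route.} You prove only the weak chain $g_1\geq\cdots\geq g_t$. You then observe that for a nonincreasing sequence the averaging inequality is strict for every $1\le s<t$ as soon as the sequence is nonconstant. Constancy would force $H(X)=tH(X_1)=\sum_i H(X_i)$, that is, mutual independence. You refute independence with a one-line product-form comparison of $(b,0,\ldots,0)$ and $(b,c,0,\ldots,0)$.

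\emph{What each route buys.} Your argument avoids both the mixture representation and the level-by-level conditional computation. It also proves a more general fact: for any exchangeable $X$, the inequality $H(X_S)/|S|>H(X)/t$ holds for every nonempty proper $S$ unless the coordinates are i.i.d., in which case equality holds throughout. The paper's argument gives the finer information that every increment strictly drops, which is more than the proposition needs.
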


\begin{proof}
For $m=1,\ldots,t$, let $X_{<m}=(X_1,\ldots,X_{m-1})$, with $X_{<1}$
empty, and set $a_m:=H(X_m\mid X_{<m})$.
Exchangeability gives
$H(X_m\mid X_{<m})=H(X_{m+1}\mid X_{<m})$, and hence, for $m<t$,
\[
a_m-a_{m+1}=I(X_m;X_{m+1}\mid X_{<m}).
\]
Here $I$ denotes conditional mutual information.

It remains to show that this conditional mutual information is positive.
Set $\alpha:=\theta/(1-q^{-t})$, so that $0<\alpha<1$.  The random
vector $X$ can be generated using a Bernoulli random variable $Z$ with
$\Prob[Z=1]=\alpha$: if $Z=0$, set $X=\overline 0$, while if $Z=1$, choose
$X_1,\ldots,X_t$ independently and uniformly from $G_q$.  This gives
exactly the distribution $P_\theta$.

Fix $m<t$ and condition on the positive-probability event
\[
E_m:=\{X_1=\cdots=X_{m-1}=0\};
\]
here $E_1$ is the sure event.  Writing
$\beta:=\Prob[Z=1\mid E_m]$, we have
$\beta=\alpha q^{-(m-1)}/(1-\alpha+\alpha q^{-(m-1)})\in(0,1)$.
Consequently,
\begin{align*}
\Prob[X_m\neq0,\ X_{m+1}\neq0\mid E_m]
&=\beta(1-q^{-1})^2,\\
\Prob[X_m\neq0\mid E_m]
\Prob[X_{m+1}\neq0\mid E_m]
&=\beta^2(1-q^{-1})^2.
\end{align*}
The two quantities are different, so $X_m$ and $X_{m+1}$ are not
conditionally independent given $E_m$.  Since $E_m$ specifies one value
of $X_{<m}$ having positive probability, it follows that
\[
I(X_m;X_{m+1}\mid X_{<m})>0.
\]
Thus $a_1>\cdots>a_t$.  If $s=|S|<t$, exchangeability gives
$H(X_S)=H(X_1,\ldots,X_s)$, while the entropy chain rule gives
\[
H(X_1,\ldots,X_s)=\sum_{m=1}^s a_m,
\qquad H(X)=\sum_{m=1}^t a_m.
\]
Therefore,
\[
\frac{H(X_S)}{s}=\frac1s\sum_{m=1}^s a_m
>\frac1t\sum_{m=1}^t a_m=\frac{H(X)}{t},
\]
as required.
\end{proof}

Although this strict inequality is not needed for either main theorem, it
gives an alternative treatment of the Janson step in
Section~\ref{sec:nonlinear-proof}.  Retain the notation
$\rho_0,\eta,\mu$, and $\Delta_{\mathbf v}$ used there, take
$\theta=\rho_0$, and set
\[
\varepsilon_\theta
:=\min_{\emptyset\neq S\subsetneq[t]}
\left(H(P_{\theta,S})-|S|H_{q^t}(\theta)\right)>0.
\]
If the sampling slack is chosen so that
$\eta<\varepsilon_\theta/(2(2t-1))$, then combining
\eqref{eq:type-class-row-difference-count} at $\mathbf z=\mathbf 0$ with
\eqref{eq:janson-dependency-general} and the definition of $p$ gives
\[
\Delta_{\mathbf v}\leq q^{n(-\varepsilon_\theta+(2t-1)\eta)+O(1)}
=q^{-\Omega(n)}.
\]
Consequently, the ordinary Janson inequality
\cite[Theorem~8.1.1]{alon-spencer} yields
\[
\Prob[X_{\mathbf v}=0]
\leq\exp(-\mu+\Delta_{\mathbf v}/2)
=\exp\left(-q^{nt\eta+o(n)}\right),
\]
which is again sufficient for the union bound over all target matrices.

\section*{Acknowledgment}

\emph{Generative-AI use disclosure:} During the preparation of this
manuscript, the authors used OpenAI Codex (GPT-5.6 Sol model, Ultra mode) as
an auxiliary tool for exploring proof ideas, checking
calculations, and improving the exposition. All mathematical content,
including statements, proofs, and references, was independently verified by
the authors, who assume full responsibility for the final manuscript.


\begin{thebibliography}{99}

\bibitem{alfarano2026geometric}
G.~N.~Alfarano, G.~Marino, A.~Neri, and R.~Trombetti,
``A geometric approach to generalized covering radii of linear codes,''
arXiv:2606.16669, 2026.

\bibitem{alon-spencer}
N.~Alon and J.~H.~Spencer,
\emph{The Probabilistic Method},
4th ed. Hoboken, NJ, USA: Wiley, 2016.

\bibitem{Astola09}
J.~T.~Astola and R.~S.~Stankovic,
``Application of covering codes for reduced representations of logic
functions,'' in \emph{Proc. 39th Int. Symp. Multiple-Valued Logic}, 2009,
pp.~304--311.

\bibitem{barouch-essayag-zabokritskiy2026}
I.~Barouch Essayag and A.~L.~Zabokritskiy (Yohananov),
``The exact second generalized covering radius of binary primitive
triple-error-correcting BCH codes,'' arXiv:2608.07215, 2026.

\bibitem{Bierbrauer2008}
J.~Bierbrauer and J.~Fridrich,
``Constructing good covering codes for applications in steganography,''
in \emph{Transactions on Data Hiding and Multimedia Security III},
Lecture Notes in Computer Science, vol.~4920. Berlin, Germany: Springer,
2008, pp.~1--22, doi: 10.1007/978-3-540-69019-1\_1.

\bibitem{chung-graham-frankl-shearer}
F.~R.~K.~Chung, R.~L.~Graham, P.~Frankl, and J.~B.~Shearer,
``Some intersection theorems for ordered sets and graphs,''
\emph{J. Combin. Theory Ser. A}, vol.~43, no.~1, pp.~23--37, 1986,
doi: 10.1016/0097-3165(86)90019-1.

\bibitem{cohen-covering-codes}
G.~Cohen, I.~Honkala, S.~Litsyn, and A.~Lobstein,
\emph{Covering Codes},
Amsterdam, The Netherlands: North-Holland, 1997.

\bibitem{Cohen85}
G.~Cohen and P.~Frankl,
``Good coverings of Hamming spaces with spheres,''
\emph{Discrete Math.}, vol.~56, nos.~2--3, pp.~125--131, 1985.

\bibitem{cover-thomas}
T.~M.~Cover and J.~A.~Thomas,
\emph{Elements of Information Theory},
2nd ed. Hoboken, NJ, USA: Wiley, 2006.

\bibitem{csiszar-korner}
I.~Csisz\'ar and J.~K\"orner,
\emph{Information Theory: Coding Theorems for Discrete Memoryless Systems},
2nd ed. Cambridge, U.K.: Cambridge Univ. Press, 2011.

\bibitem{elimelech-firer-schwartz}
D.~Elimelech, M.~Firer, and M.~Schwartz,
``The generalized covering radii of linear codes,''
\emph{IEEE Trans. Inf. Theory}, vol.~67, no.~12, pp.~8070--8085, 2021.

\bibitem{elimelech-schwartz}
D.~Elimelech and M.~Schwartz,
``The second-order football-pool problem and the optimal rate of
generalized-covering codes,'' \emph{J. Combin. Theory Ser. A}, vol.~203,
Art. no.~105834, 2024.

\bibitem{elimelech-wei-schwartz}
D.~Elimelech, H.~Wei, and M.~Schwartz,
``On the generalized covering radii of Reed--Muller codes,''
\emph{IEEE Trans. Inf. Theory}, vol.~68, no.~7, pp.~4378--4391, 2022.

\bibitem{Fernandes1983}
H.~Fernandes and E.~Rechtschaffen,
``The football pool problem for 7 and 8 matches,''
\emph{J. Combin. Theory Ser. A}, vol.~35, pp.~109--114, 1983.

\bibitem{Hamalainen1995}
H.~H{\"a}m{\"a}l{\"a}inen, I.~Honkala, S.~Litsyn, and
P.~R.~J.~{\"O}sterg{\aa}rd,
``Football pools---a game for mathematicians,''
\emph{Amer. Math. Monthly}, vol.~102, no.~7, pp.~579--588, 1995.

\bibitem{Kamps1967}
H.~J.~Kamps and J.~H.~van Lint,
``The football pool problem for 5 matches,''
\emph{J. Combin. Theory Ser. A}, vol.~3, pp.~315--325, 1967.

\bibitem{Linderoth09}
J.~Linderoth, F.~Margot, and G.~Thain,
``Improving bounds on the football pool problem by integer programming and
high-throughput computing,'' \emph{INFORMS J. Comput.}, vol.~21, no.~3,
pp.~445--457, 2009.

\bibitem{Lovasz1975}
L.~Lov\'asz,
``On the ratio of optimal integral and fractional covers,''
\emph{Discrete Math.}, vol.~13, no.~4, pp.~383--390, 1975.

\bibitem{luo-et-al2026}
R.~Luo, Z.~Zhou, S.~Mesnager, V.~Sagar, and H.~Yan,
``Determining the exact value of the second-order generalized covering
radius of two classes of binary cyclic codes,''
\emph{Des. Codes Cryptogr.}, vol.~94, Art. no.~71, 2026,
doi: 10.1007/s10623-025-01799-2.

\bibitem{Micciancio2004}
D.~Micciancio,
``Almost perfect lattices, the covering radius problem, and applications to
Ajtai's connection factor,'' \emph{SIAM J. Comput.}, vol.~34, no.~1,
pp.~118--169, 2004.

\bibitem{Ostergard1996}
P.~R.~J.~{\"O}sterg{\aa}rd,
``A combinatorial proof for the football pool problem for six matches,''
\emph{J. Combin. Theory Ser. A}, vol.~76, pp.~160--163, 1996.

\bibitem{ozbudak-ozturk-second2026}
F.~{\"O}zbudak and \.{I}.~{\"O}zt{\"u}rk,
``On the second generalized covering radius for binary primitive
triple-error-correcting BCH codes,'' \emph{Des. Codes Cryptogr.}, vol.~94,
Art. no.~165, 2026, doi: 10.1007/s10623-026-01903-0.

\bibitem{ozbudak-ozturk2026}
F.~{\"O}zbudak and \.{I}.~{\"O}zt{\"u}rk,
``The third generalized covering radius for binary primitive
double-error-correcting BCH codes,'' \emph{Finite Fields Appl.}, vol.~110,
Art. no.~102749, 2026, doi: 10.1016/j.ffa.2025.102749.

\bibitem{Stein1974}
S.~K.~Stein,
``Two combinatorial covering theorems,''
\emph{J. Combin. Theory Ser. A}, vol.~16, no.~3, pp.~391--397, 1974.

\bibitem{VKWei1991}
V.~K.~Wei,
``Generalized Hamming weights for linear codes,''
\emph{IEEE Trans. Inf. Theory}, vol.~37, no.~5, pp.~1412--1418, 1991.

\bibitem{xiong-yip2026}
M.~Xiong and C.~H.~Yip,
``On generalized covering radii of binary primitive double-error-correcting
BCH codes,'' arXiv:2603.21068, 2026.

\bibitem{yohananov-schwartz2025}
L.~Yohananov and M.~Schwartz,
``The second generalized covering radius of binary primitive
double-error-correcting BCH codes,'' \emph{Finite Fields Appl.}, vol.~107,
Art. no.~102648, 2025, doi: 10.1016/j.ffa.2025.102648.

\bibitem{fulman-goldstein2015}
J.~Fulman and L.~Goldstein, ``Stein's method and the rank distribution of random matrices over finite fields,''
\emph{Ann. Probab.},
vol. 43, no. 3, pp.~1274--1314, 2015.
\end{thebibliography}
\end{document}